\newcommand{\R}{\ensuremath{\mathbb{R}}}
\newcommand{\Z}{\ensuremath{\mathbb{Z}}}
\DeclarePairedDelimiter\inner{\langle}{\rangle}
\DeclarePairedDelimiter\set{\{}{\}}
\DeclarePairedDelimiter\floor{\lfloor}{\rfloor}
\DeclarePairedDelimiter\ceil{\lceil}{\rceil}
\newcommand{\matA}{\ensuremath{\mathbf{A}}}
\newcommand{\matB}{\ensuremath{\mathbf{B}}}
\newcommand{\matD}{\ensuremath{\mathbf{D}}}
\newcommand{\matE}{\ensuremath{\mathbf{E}}}
\newcommand{\matF}{\ensuremath{\mathbf{F}}}
\newcommand{\matG}{\ensuremath{\mathbf{G}}}
\newcommand{\matI}{\ensuremath{\mathbf{I}}}
\newcommand{\matP}{\ensuremath{\mathbf{P}}}
\newcommand{\matQ}{\ensuremath{\mathbf{Q}}}
\newcommand{\matR}{\ensuremath{\mathbf{R}}}
\newcommand{\matS}{\ensuremath{\mathbf{S}}}
\newcommand{\matT}{\ensuremath{\mathbf{T}}}
\newcommand{\matV}{\ensuremath{\mathbf{V}}}
\newcommand{\veca}{\ensuremath{\mathbf{a}}}
\newcommand{\vecb}{\ensuremath{\mathbf{b}}}
\newcommand{\vecc}{\ensuremath{\mathbf{c}}}
\newcommand{\vecd}{\ensuremath{\mathbf{d}}}
\newcommand{\vece}{\ensuremath{\mathbf{e}}}
\newcommand{\vecf}{\ensuremath{\mathbf{f}}}
\newcommand{\vecg}{\ensuremath{\mathbf{g}}}
\newcommand{\vecp}{\ensuremath{\mathbf{p}}}
\newcommand{\vecs}{\ensuremath{\mathbf{s}}}
\newcommand{\vect}{\ensuremath{\mathbf{t}}}
\newcommand{\vecu}{\ensuremath{\mathbf{u}}}
\newcommand{\vecv}{\ensuremath{\mathbf{v}}}
\newcommand{\vecx}{\ensuremath{\mathbf{x}}}
\newcommand{\vecy}{\ensuremath{\mathbf{y}}}
\newcommand{\vecz}{\ensuremath{\mathbf{z}}}
\newcommand{\veczero}{\ensuremath{\mathbf{0}}}
\newcommand{\cR}{\mathcal R}
\newcommand{\lamperp}{\Lambda^{\perp}}
\newcommand{\lat}{\Lambda}
\newcommand{\smooth}{\eta}
\newcommand{\smootheps}{\smooth_{\epsilon}}
\newcommand{\gs}[1]{\ensuremath{\widetilde{#1}}}
\DeclareMathOperator{\vol}{vol}
\DeclareMathOperator{\spn}{span}
\renewcommand{\pod}[1]{\mathchoice
  {\allowbreak \if@display \mkern 18mu\else \mkern 8mu\fi (#1)}
  {\allowbreak \if@display \mkern 18mu\else \mkern 8mu\fi (#1)}
  {\mkern4mu(#1)}
  {\mkern4mu(#1)}
}
\let\@@pmod\pmod
\DeclareRobustCommand{\pmod}{\@ifstar\@pmods\@@pmod}
\def\@pmods#1{\mkern4mu({\operator@font mod}\mkern 6mu#1)}
\newenvironment{proof*}[1]
{%
\begin{proof}}
{\end{proof}}
\newcommand{\orth}{\lamperp}
\newcommand{\Exp}{\mathbb{E}}
\newcommand{\algGadget}{\mathsf{ApproxGadget}}
\newcommand{\algPreSamp}{\mathsf{ApproxPreSamp}}
\newcommand{\algDecode}{\mathsf{LatticeDecoder}}
\newcommand{\Robin}{\textsc{Robin}}
\newcommand{\Eagle}{\textsc{Eagle}}
\newcommand{\algNtruKG}{\mathsf{Robin.KeyGen}}
\newcommand{\algNtruSign}{\mathsf{Robin.Sign}}
\newcommand{\algNtruVerify}{\mathsf{Robin.Verify}}
\newcommand{\algLWEKG}{\mathsf{Eagle.KeyGen}}
\newcommand{\algLWESign}{\mathsf{Eagle.Sign}}
\newcommand{\algLWEVerify}{\mathsf{Eagle.Verify}}
\newcommand{\hash}{\ensuremath{\mathsf{H}}}
\newcommand{\salt}{\ensuremath{\mathsf{salt}}}
\newcommand{\msg}{\ensuremath{\mathsf{msg}}}
\newcommand{\seed}{\ensuremath{\mathsf{seed}}}
\newcommand{\expand}{\mathsf{Expand}}
\title{Compact Lattice Gadget and Its Applications to Hash-and-Sign Signatures}
\author{Yang Yu\inst{1,2,3} \and Huiwen Jia\inst{4,5} \and Xiaoyun Wang\inst{6,7,8}}
\institute{
BNRist, Tsinghua University, Beijing, China\\
\email{yu-yang@mail.tsinghua.edu.cn}
\and
Zhongguancun Laboratory, Beijing, China\\
\and
National Financial Cryptography Research Center, Beijing, China\\
\and
School of Mathematics and Information Science, Key Laboratory of Information Security, Guangzhou University, Guangzhou, China\\
\email{hwjia@gzhu.edu.cn}
\and
Guangzhou Center for Applied Mathematics, Guangzhou University, Guangzhou, China\\
\and
Institute for Advanced Study, Tsinghua University, Beijing, China\\
\email{xiaoyunwang@mail.tsinghua.edu.cn}
\and 
Key Laboratory of Cryptologic Technology and Information Security (Ministry of Education), School of Cyber Science and Technology, Shandong University, Qingdao, China
\and
Shandong Institute of Blockchain, Jinan, China\\
}
\begin{document}
\maketitle

\begin{abstract}

Lattice gadgets and the associated algorithms are the essential building blocks of lattice-based cryptography.
In the past decade, they have been applied to build versatile and powerful cryptosystems.
However, the practical optimizations and designs of gadget-based schemes generally lag their theoretical constructions.
For example, the gadget-based signatures have elegant design and capability of extending to more advanced primitives, but they are far less efficient than other lattice-based signatures.

This work aims to improve the practicality of gadget-based cryptosystems, with a focus on hash-and-sign signatures.
To this end, we develop a compact gadget framework in which the used gadget is a \emph{square} matrix instead of the short and fat one used in previous constructions.
To work with this compact gadget, we devise a specialized gadget sampler, called \emph{semi-random sampler}, to compute the approximate preimage.
It first \emph{deterministically} computes the error and then randomly samples the preimage.
We show that for uniformly random targets, the preimage and error distributions are simulatable without knowing the trapdoor.
This ensures the security of the signature applications.
Compared to the Gaussian-distributed errors in previous algorithms,
the deterministic errors have a smaller size, which lead to a substantial gain in security and enables a practically working instantiation.

As the applications, we present two practically efficient gadget-based signature schemes based on NTRU and Ring-LWE respectively.
The NTRU-based scheme offers comparable efficiency to Falcon and Mitaka and a simple implementation without the need of generating the NTRU trapdoor.
The LWE-based scheme also achieves a desirable overall performance.
It not only greatly outperforms the state-of-the-art LWE-based hash-and-sign signatures, but also has an even smaller size than the LWE-based Fiat-Shamir signature scheme Dilithium.
These results fill the long-term gap in practical gadget-based signatures.

\end{abstract}

\section{Introduction}\label{sec:intro}
Lattice-based cryptography is a promising post-quantum cryptography family having attractive features in both theory and practice.
It has been shown to provide powerful versatility leading to various advanced cryptosystems including fully homomorphic encryption~\cite{gentry2009fully}, attribute-based encryption~\cite{gorbunov2013attribute},
group signatures~\cite{gordon2010group} and
much more~\cite{gorbunov2015predicate,brakerski2016obfuscating,agrawal2017stronger,peikert2019noninteractive}.
For the basic encryption and signatures, lattice-based schemes are the most practically efficient among post-quantum cryptosystems and
three of four post-quantum algorithms selected by NIST for standardization are lattice-based: Kyber~\cite{Kyber} for public key encryption/KEMs; Dilithium~\cite{dilithium} and Falcon~\cite{falcon} for digital signatures.

At the core of many lattice-based schemes is the so-called Ajtai's function $f_\matA(\vecx) = \matA\vecx \bmod Q$ where $\matA \in \Z_Q^{n \times m}$ is a short and fat random matrix.
Ajtai showed in his seminal work~\cite{ajtai1996generating} that the inversion of $f_\matA$, i.e.
finding a short preimage $\vecx$, is as hard as some worst-case lattice problems.
With a \emph{lattice trapdoor} for $\matA$, one can efficiently compute a short preimage.
In some applications, e.g. signatures, the preimage distribution is required to be \emph{simulatable} without knowing the trapdoor.
This is essential for security: some early proposals~\cite{GGH,NTRUSign} were indeed broken by statistical attacks~\cite{nguyen2006learning,ducas2012learning,yu2018learning}, since the preimages leak information of the trapdoor.
To get rid of such leaks, Gentry, Peikert and Vaikuntanathan proposed a provably secure trapdoor framework, known as the GPV framework~\cite{gentry2008trapdoors}, in which the preimage is sampled from a distribution statistically close to some publicly known discrete Gaussian.
In the past decade, the GPV framework has been continuously enriched by new Gaussian sampling algorithms and trapdoor constructions.
This leads to a series of efficient instantiations that can be basically classified into two families: \emph{NTRU trapdoor based} and \emph{gadget based}.

\subsubsection{NTRU trapdoor based GPV instantiations.}
The NTRU trapdoor, that is a high-quality basis of the NTRU lattice, was originally used in~\cite{NTRUSign}.
In~\cite{ducas2014efficient}, Ducas, Lyubashevsky and Prest first discovered that
the lengths of the NTRU trapdoors can be within a small constant factor of optimal by choosing proper parameters, which gives a compact instantiation of the GPV framework over NTRU lattices. As an application, they presented the first lattice-based identity-based encryption (IBE) scheme with practical parameters.
This instantiation was further developed as the Falcon signature scheme by integrating the fast Fourier sampler~\cite{ducas2016fast}.
Falcon is now selected by NIST for the post-quantum standaradization, due to its good performance in terms of bandwidth and efficiency.
However, the signing and key generation algorithms of Falcon are rather complex.
Recently, Espitau et al. proposed a simplified variant of Falcon, called Mitaka~\cite{espitau2022simpler}. Mitaka uses the hybrid sampler~\cite{prest2015phd} for easier implementation at the cost of a substantial security loss.
To mitigate the security loss, Mitaka adopts some techniques to improve the trapdoor quality, which further complicates the key generation.
Overall, Falcon and Mitaka are currently the most efficient lattice-based signatures, but their complex algorithms may be difficult to implement in constrained environments.
Furthermore, the security of NTRU is shown to be significantly reduced in the overstretched parameter regime~\cite{kirchner2017revisiting,ducas2021ntru}, thus
the NTRU trapdoor based instantiations are mainly used in signature and IBE applications.

\subsubsection{Gadget based GPV instantiations.}
The gadget based instantiation was first proposed by Micciancio and Peikert~\cite{micciancio2012trapdoors}.
In the Micciancio-Peikert framework, the public matrix $\matA = [\bar{\matA} \mid \matG - \bar{\matA}\matR]$ where the trapdoor $\matR$ is a matrix with small entries and the \emph{gadget} $\matG = \matI_n \otimes \vecg^t$ with $\vecg = (1, b,\cdots,b^{k-1}),k = \ceil{\log_b(Q)}$.
The inversion of $f_\matA$ is converted into the inversion of $f_\matG$ with $\matR$.
The latter boils down to the Gaussian sampling over the lattice $\orth_Q(\vecg) = \set{\vecu \mid \inner{\vecu, \vecg} = 0 \bmod Q}$ that is easy and fast~\cite{micciancio2012trapdoors,genise2018faster,zhang2022towards}.
Compared to the NTRU trapdoor based GPV instantiations, the gadget based framwork offers significant advantages in terms of implementation and turns out to be extremely versatile for the constructions of advanced primitives.
However, the gadget based schemes suffer from rather large preimage and public key sizes.
To improve the practicality of gadget based schemes, Chen, Genise and Mukherjee introduced the notion of approximate trapdoor~\cite{chen2019approximate} and proposed to use a truncated gadget $\vecf = (b^l,\cdots,b^{k-1})$ for the trapdoor construction.
While the improvement is substantial, the size of their gadget-based signature scheme is still far larger than that of Falcon and Dilithium.

As seen above, both NTRU trapdoor based schemes and gadget based ones occupy fairly different positions in lattice-based cryptography, but they also have own limitations.
Particularly, the practical designs of gadget-based cryptosystems still lag far behind their theoretical constructions.
It is therefore important to improve the practical efficiency of gadget-based cryptosystems including the hash-and-sign signatures.

\subsubsection{Our Contributions.}
We develop some new technique to reduce the size of the gadget-based schemes.
Using our compact gadget, we propose two hash-and-sign signature schemes based on NTRU and Ring-LWE respectively.
They both offer a desirable performance and an easy implementation.
This fills the gap in practical gadget-based signatures.

\paragraph{Compact gadget with semi-random sampler.}
In our construction, the used gadget is $\matP \in \Z^{n\times n}$ along with $\matQ \in \Z^{n\times n}$ such that
$\matP\matQ = Q\cdot \matI_n$, and the trapdoor $\matT$ for the public matrix $\matA \in \Z_Q^{n\times m}$ satisfies $\matA\matT=\matP\mod Q$.
The main technique to enable this compact gadget is a new gadget sampler for approximate trapdoors, called \emph{semi-random} sampler.
Given the target $\vecu$, this sampler computes a short approximate preimage $\vecx$ such that
$\vecu = \matP\vecx + \vece \bmod Q$ with a short error $\vece$.
In our sampler, only the preimage is randomly generated and the error is fixed by the target, which is why we name ``semi-random''.
More concretely, the semi-random sampler consists of two steps respectively performed over the lattices defined by $\matP$ and $\matQ$:
\begin{enumerate}
	\item \emph{Deterministic error decoding:} The sampler first computes an error $\vece$ such that $\vecu-\vece = \matP\vecc \in \lat(\matP)$ with deterministic lattice decoding.
	\item \emph{Random preimage sampling:} Then the sampler generates a short preimage $\vecx \in \lat(\matQ)+\vecc$ with Gaussian sampling.
\end{enumerate}
It is easy to verify that $\matP\vecx = \vecu - \vece \bmod Q$.
Despite the deterministic errors, we show that the distribution of $(\vecx,\vece)$ is simulatable for \emph{uniformly random targets}. This is sufficient for the applications of digital signatures.

Our general construction can be instantiated with various lattices $\lat(\matP)$ and $\lat(\matQ)$ with specialized decoding and sampling algorithms.
This opens up interesting avenues in the designs of lattice trapdoors.
This paper showcases the merit of our gadget construction with a natural and simple instantiation: $(\matP = p\matI_n, \matQ = q\matI_n)$.
We now contrast this simple instantiation with the truncated gadget $ \matI_n \otimes  (b^l,\cdots,b^{k-1})^t$ in~\cite{chen2019approximate}.
Indeed, the gadget in our instantiation has the same structure with the special case of~\cite{chen2019approximate} in which $l = k-1$, $b^l = p$ and $Q=b^{l+1}$,
but the associated sampling algorithms in two cases are quite distinct, which yields the differences in size and efficiency (see Table~\ref{tab:sampler}).
For uniformly random targets, the error in our gadget sampler is uniformly distributed over $\Z_p^n$ and the preimage is distributed as Gaussian of width $q\cdot \omega(\sqrt{\log n})$.
Then the error size is $\approx \frac{p\sqrt{n}}{\sqrt{12}}$ and the preimage size is $\approx \frac{Q}p\cdot\omega(\sqrt{n\log n})$.
When it comes to the case of~\cite{chen2019approximate},
the error and the preimage are distributed as Gaussian of width $\sigma\sqrt{\frac{b^{2l}-1}{b^2-1}}$ and $\sigma$ respectively where $\sigma \geq \sqrt{b^2+1}\cdot\omega(\sqrt{\log n})$.
Then the error size is $\approx p\cdot\omega(\sqrt{n\log n})$ and the preimage size is $\approx \frac{Q}p\cdot\omega(\sqrt{n\log n})$ when $b^l=p$ and $Q = b^{l+1}$.
As a consequence, our technique reduces the error size by a factor of $\sqrt{12}\cdot\omega(\sqrt{\log n})$ while keeping the preimage size.
This gives a noticable gain in concrete security and enables a practically working instantiation with the compact gadget.
In addition, our semi-random sampler only needs $n$ times integer Gaussian sampling along with $n$ times modulo operations, whereas
the sampler in~\cite{chen2019approximate} needs $nk$ times integer Gaussian sampling along with $O(nk)$ additions and multiplications.
Our sampler is therefore simpler and more efficient.
To sum up, our technique substantially improves the practical performance of the gadget-based schemes.
\begin{table}[H]\renewcommand\arraystretch{1.5}
	\centering
	\caption{Comparisons with our gadget sampler with the special case of~\cite{chen2019approximate} in which $l = k-1$, $b^l = p$ and $Q=b^{l+1}$.}\label{tab:sampler}
	\setlength{\tabcolsep}{0.5mm}{}
	\begin{tabular}{p{2cm}<{\centering} p{2.5cm}<{\centering} p{2.5cm}<{\centering} p{3cm}<{\centering}}
		\hline
		& preimage size & error size&  $\#$integer sampling \\
		\hline
		\cite{chen2019approximate}  & $\frac{Q}p\cdot\omega(\sqrt{n\log n})$ & $p\cdot\omega(\sqrt{n\log n})$ & $nk$\\
		This work & $\frac{Q}p\cdot\omega(\sqrt{n\log n})$ & $\frac{p\sqrt{n}}{\sqrt{12}}$ &  $n$ \\
		\hline
	\end{tabular}
\end{table}

\paragraph{Simpler NTRU-based hash-and-sign signatures.}
We use the new gadget algorithms to build a new NTRU-based hash-and-sign signature scheme \Robin.
It achieves high efficiency comparable to Falcon~\cite{falcon} and Mitaka~\cite{espitau2022simpler} that are two representative NTRU trapdoor based signatures (see Table~\ref{tab:NTRU}).
The main advantage of \Robin{} is its convenient implementation.
Firstly, \Robin{} uses one NTRU vector instead of a full NTRU trapdoor basis as the signing key, which avoids the highly complex key generation.
Secondly, like most of gadget-based signatures, the signing procedure of \Robin{} has an online/offline structure and the online sampling only consists of $D_{8\Z+c,r}$ for $c =0,1,\cdots,7$, which allows an easier and more efficient implementation and side-channel protection.
Additionally, the whole \Robin{} algorithm including key generation, signing and verification can be conveniently implemented without using floating-point arithmetic.
Therefore, \Robin{} can be seen as an attractive post-quantum signature scheme especially for constrained devices.
\begin{table*}[h]\renewcommand\arraystretch{1.2}	
	\centering
	\caption{Comparison between \Robin{} with Falcon~\cite{falcon} and Mitaka~\cite{espitau2022simpler} at NIST-I and NIST-V security levels. }
	\label{tab:NTRU}
	\centerline{{
			\setlength\tabcolsep{2pt}
			\begin{tabular}{rcccccc}
				\toprule
				& \multicolumn{3}{c}{NIST-I level} &
				\multicolumn{3}{c}{NIST-V level}
				\\
				\cmidrule(lr){2-4}\cmidrule(lr){5-7}
				&{Falcon} & {Mitaka} &
				{\Robin} & {Falcon} &
				{Mitaka} &  {\Robin} \\
				\midrule
				Sig.{} size (bytes)  & 643  & 807 & 992 & 1249 & 1376 &  1862 \\
				Pub.{} key size (bytes)  & 896  & 972 & 1227  & 1792 & 1792 & 2399   \\
				\bottomrule
			\end{tabular}
	}}
\end{table*}

\paragraph{Shorter LWE-based hash-and-sign signatures.}
We also propose a Ring-LWE-based instantiation of signatures based on our gadget, called \Eagle.
While \Eagle{} is less efficient than its NTRU-based counterpart \Robin, it still has a desirable performance and a simple implementation.
Compared to other LWE-based hash-and-sign signatures, \Eagle{} offers a significantly smaller bandwidth.
Specifically, the signature (resp. public key) size of \Eagle{} is $\leq 55\%$ (resp. $\leq 35\%$) of that of the scheme from~\cite{chen2019approximate} with refined parameters and security estimates for both 80-bits and 192-bits of security levels.
In fact, \Eagle{} is even more compact than Dilithium that is a representative LWE-based Fiat-Shamir signature scheme: 
for 192-bits of security level, the signature size of \Eagle{} is smaller by $\approx 8\%$ compared with Dilithium. 
To the best of our knowledge, \Eagle{} is the first LWE-based hash-and-sign signature scheme of key and signature sizes on par or better than practical LWE-based Fiat-Shamir signatures.

	
	\begin{table*}[h]\renewcommand\arraystretch{1.2}	
		\centering
		\caption{Comparison between \Eagle{} with~\cite{chen2019approximate} at $80$-bits and NIST-III ($192$-bits) security levels. }
		\label{tab:LWE}
		\centerline{{
				\setlength\tabcolsep{4pt}
				\begin{tabular}{rcccc}
					\toprule
					& \multicolumn{2}{c}{80-bits security} &
					\multicolumn{2}{c}{NIST-III level}
					\\
					\cmidrule(lr){2-3}\cmidrule(lr){4-5}
					&{\cite{chen2019approximate}}  &
					{\Eagle} & {\cite{chen2019approximate}} &
					{\Eagle} \\
					\midrule
					Sig.{} size (bytes)  & 2753    & 1406   & 7172    & 3052    \\
					Pub.{} key size (bytes)& 2720  & 928   & 7712    & 1952    \\
					\bottomrule
				\end{tabular}
		}}
	\end{table*}

	\subsubsection{Roadmap.}
	We start in Section~\ref{sec:prelim} with preliminary materials, followed by recalling the existing gadget trapdoors in Section~\ref{sec:recall}.
	Section~\ref{sec:gadget} introduces our new gadget and the corresponding approximate trapdoor framework.
	We present concrete NTRU-based and Ring-LWE-based hash-and-sign signatures instantiated with our compact gadget framework in Section~\ref{sec:ntru} and Section~\ref{sec:lwe} respectively.
	Finally, we conclude in Section~\ref{sec:conclusion}.

\section{Preliminaries}\label{sec:prelim}

\paragraph{Notations}
Let $\mathbb{R}$ and $\mathbb{Z}$ denote the set of real numbers and integers respectively.
For a positive integer $q$, let $\Z_q = \set{-\floor{q/2}, -\floor{q/2}+1,\cdots,q-\floor{q/2} -1}$.
For a real-valued function $f$ and a countable set $S$, we write $f(S) = \sum_{x\in S} f(x)$
assuming this sum is absolutely convergent.
We write $a \gets D$ to represent the sample $a$ drawn from the distribution $D$. For a finite set $S$, let $U(S)$ be the uniform distribution over $S$ and $a \overset{\$}{\leftarrow} S$ denote the sample $a\gets U(S)$.

\subsection{Linear algebra and lattices}
A vector is denoted by
a bold lower case letter, e.g. $\vecx=(x_1,\ldots,x_n)$, and in column form. The concatenation of $\vecx_1,\vecx_2$ is denoted by $(\mathbf x_1,\mathbf x_2)$.
Let $\inner{\vecx,\vecy}$ be the inner product of $\vecx,\vecy \in \R^n$ and
$\|\vecx\| = \sqrt{\inner{\vecx,\vecx}}$ be the $\ell_2$ norm of $\vecx$.
A matrix is denoted by a bold upper case letter, e.g. $\matA=[\veca_1\mid\cdots\mid\veca_n]$, where $\veca_i$ denotes the $i^{th}$ column of $\mathbf A$.
Let $\gs{\matA} = [\gs{\veca_1}\mid\cdots\mid\gs{\veca_n}]$ denote the Gram-Schmidt orthogonalization of $\matA$.
Let $\mathbf A\oplus\mathbf B$ denote the block diagonal concatenation of $\mathbf A$ and $\mathbf B$. The largest singular value of $\mathbf A$ is denoted by $s_1(\mathbf A)=\max_{\mathbf{x\neq0}}\frac{\|\mathbf{Ax}\|}{\|\mathbf x\|}$. Let $\matA^t$ be the transpose of $\matA$.

We write $\Sigma\succ0$, when a symmetric matrix $\Sigma\in\mathbb{R}^{m\times m}$ is positive definite, i.e. $\mathbf{x}^t\Sigma\mathbf{x}>0$ for all nonzero $\mathbf{x}\in\mathbb{R}^m$. We write $\Sigma_1\succ\Sigma_2$ if $\Sigma_1-\Sigma_2\succ0$. For any scalar $s$, we write $\Sigma\succ s$ if  $\Sigma-s\cdot\mathbf I\succ0$. If $\Sigma=\mathbf{BB}^t$, we call $\mathbf{B}$ a square root of $\Sigma$. We use $\sqrt{\Sigma}$ to denote any square root of $\Sigma$ when the context permits it.


Given $\mathbf B=[\mathbf b_1\mid\cdots\mid\mathbf b_n]\in\mathbb R^{m\times n}$ with each $\vecb_i$ linearly independent, the lattice generated by $\matB$ is
$\lat(\matB) = \set{\matB\vecz \mid \vecz \in \Z^n}$. The dimension of $\lat(\matB)$ is $n$ and $\matB$ is called a basis.
Let $\lat^* = \set{\vecy \in \spn(\lat) \mid \inner{\vecx,\vecy} \in \Z, \forall \vecx \in \lat}$ be the dual lattice of a lattice $\lat$.

In lattice-based cryptography, the $q$-ary lattice is of special interest and defined for some $\matA \in \Z_q^{n\times m}$ as
\begin{equation*}
\Lambda^{\perp}_q(\mathbf{A})  = \{\mathbf{x}\in\mathbb{Z}^m: \mathbf{A x}=\mathbf{0}\!\!\mod q\}.
\end{equation*}
The dimension of $\Lambda^{\perp}_q(\mathbf{A})$ is $m$ and $(q\cdot\Z)^m \subseteq  \Lambda^{\perp}_q \subseteq \Z^m$.
Each $\vecu \in \Z^n_q$ defines a lattice coset
\begin{equation*}
	\Lambda^{\perp}_{q,\vecu}(\mathbf{A})  = \{\mathbf{x}\in\mathbb{Z}^m: \mathbf{A x}=\mathbf{u}\!\!\mod q\}.
\end{equation*}

Given a matrix $\matA\in\Z_q^{n\times m}$, let $f_\matA(\vecx) = \matA\vecx \bmod q$ be the associated Ajtai's function~\cite{ajtai1996generating} where $\vecx$ is usually short. We simply denote by $f^{-1}_\matA$ the inversion procedure, namely finding a short \emph{preimage} $\vecx$.

\subsection{Gaussians}
The Gaussian function $\rho:\mathbb{R}^m\rightarrow(0,1]$ is defined as $\rho(\mathbf{x})=\exp(-\pi\cdot\langle\mathbf{x},\mathbf{x}\rangle)$. Applying a linear transformation given by an invertible matrix $\mathbf{B}$ yields
$$
\rho_{\mathbf{B}}(\mathbf{x})=\rho(\mathbf{B}^{-1}\mathbf x)=\exp(-\pi\cdot\mathbf{x}^t\Sigma^{-1}\mathbf{x}),
$$
where $\Sigma=\mathbf{BB}^t$.
Since $\rho_{\mathbf{B}}$ is exactly determined by $\Sigma$, we also write it as $\rho_{\sqrt{\Sigma}}$.
For a lattice $\Lambda$ and $\vecc \in \spn(\Lambda)$,
the discrete Gaussian distribution $D_{\Lambda+\vecc,\sqrt{\Sigma}}$ is defined as: for any $\vecx \in \Lambda + \vecc$,
\[ D_{\Lambda+\vecc,\sqrt{\Sigma}}(\vecx)=\frac{\rho_{\sqrt{\Sigma}}(\vecx)}{\rho_{\sqrt{\Sigma}}(\Lambda+\vecc)}.
\]


Let
$\smootheps(\Lambda) = \min\set{s > 0 \mid \rho(s\cdot\Lambda^*)\leq1+\epsilon}$ be the smoothing parameter with respect to a lattice $\lat$ and $\epsilon\in(0,1)$.
We write $\sqrt{\Sigma}\geq\eta_{\epsilon}(\Lambda)$, if $\rho_{\sqrt{\Sigma^{-1}}}(\Lambda^*)\leq1+\epsilon$.


\begin{lemma}[\cite{gentry2008trapdoors}]\label{lemma:etabound}
	Let $\Lambda$ be an $m$-dimensional lattice with a basis $\mathbf B$, then
	$
	\eta_{\epsilon}(\Lambda)\leq
	\max_i \|\gs{\vecb_i}\|\cdot\sqrt{\log{(2m(1+1/\epsilon))}/\pi},
	$
    where $\gs{\vecb_i}$ is the $i$-th vector of $\gs{\mathbf B}$.
\end{lemma}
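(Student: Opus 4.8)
The statement to prove is Lemma~\ref{lemma:etabound}, the standard upper bound on the smoothing parameter $\eta_\epsilon(\Lambda) \le \max_i \|\gs{\vecb_i}\| \cdot \sqrt{\log(2m(1+1/\epsilon))/\pi}$. This is the classical bound from Gentry--Peikert--Vaikuntanathan, so the plan is to follow the usual argument via the dual lattice and a union bound over the Gram--Schmidt coordinates.

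\textbf{Plan.} By definition of $\eta_\epsilon$, it suffices to exhibit a value $s \le \max_i \|\gs{\vecb_i}\| \cdot \sqrt{\log(2m(1+1/\epsilon))/\pi}$ for which $\rho(s \cdot \Lambda^*) \le 1+\epsilon$, i.e. $\rho_{1/s}(\Lambda^*) \le 1+\epsilon$. First I would pass to the dual: if $\matB$ is the given basis of $\Lambda$, then the dual basis $\matD = \matB(\matB^t\matB)^{-1}$ generates $\Lambda^*$, and a standard fact relating Gram--Schmidt orthogonalizations is that the Gram--Schmidt vectors of the dual basis, taken in \emph{reverse} order, are exactly $\gs{\vecd_i} = \gs{\vecb_i}/\|\gs{\vecb_i}\|^2$; in particular $\|\gs{\vecd_i}\| = 1/\|\gs{\vecb_i}\|$, so $\min_i \|\gs{\vecd_i}\| = 1/\max_i \|\gs{\vecb_i}\|$. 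Set $s = \max_i \|\gs{\vecb_i}\| \cdot \sqrt{\log(2m(1+1/\epsilon))/\pi}$, so that $s \cdot \|\gs{\vecd_i}\| \ge \sqrt{\log(2m(1+1/\epsilon))/\pi}$ for every $i$; write $r := 1/s$, so that $\|\gs{\vecd_i}\| \ge r \sqrt{\log(2m(1+1/\epsilon))/\pi}$... more precisely I want $\rho_r(\Lambda^* \setminus \{\mathbf 0\}) \le \epsilon$, which then gives $\rho_r(\Lambda^*) \le 1+\epsilon$.

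\textbf{Key steps.} The core estimate bounds the number of dual lattice points in shells and sums the Gaussian mass. Every nonzero $\vecy \in \Lambda^*$ can be written $\vecy = \sum_i z_i \vecd_i$ with integer $z_i$; letting $j$ be the largest index with $z_j \ne 0$, the component of $\vecy$ along $\gs{\vecd_j}$ has absolute value $|z_j| \cdot \|\gs{\vecd_j}\| \ge \|\gs{\vecd_j}\|$, hence $\|\vecy\| \ge \|\gs{\vecd_j}\|$. This lets me organize the sum $\rho_r(\Lambda^* \setminus \{\mathbf 0\})$ by the top index $j$ and, within each, by the value of $z_j$: for fixed $j$ and fixed $z_j = k \ne 0$, the remaining freedom is an $(j-1)$-dimensional coset of the sublattice spanned by $\vecd_1,\dots,\vecd_{j-1}$, shifted by $k\vecd_j$, whose Gaussian mass is at most $\rho_r$ evaluated at the coset, which I bound by $\rho_r(k \gs{\vecd_j}) \cdot \rho_r(\text{lower-dim lattice})$ using the orthogonal decomposition and the fact that shifting a Gaussian sum over a lattice away from the origin only decreases its mass (a one-dimensional-slab convexity argument, or the standard inequality $\rho_r(\Lambda'+\vecc) \le \rho_r(\Lambda')$). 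Iterating, one gets the clean bound $\rho_r(\Lambda^*\setminus\{\mathbf 0\}) \le \sum_{i=1}^m \sum_{k \ne 0} \rho_r(k \|\gs{\vecd_i}\|)$. Each inner sum is a one-dimensional Gaussian tail: $\sum_{k\ne 0} e^{-\pi k^2 \|\gs{\vecd_i}\|^2/r^2} \le 2\sum_{k\ge 1} e^{-\pi k \|\gs{\vecd_i}\|^2/r^2}$ (using $k^2 \ge k$) $= 2 \cdot \frac{e^{-\pi \|\gs{\vecd_i}\|^2/r^2}}{1-e^{-\pi\|\gs{\vecd_i}\|^2/r^2}}$, and plugging in $\|\gs{\vecd_i}\|^2/r^2 \ge \log(2m(1+1/\epsilon))/\pi$ makes each term at most $\frac{\epsilon}{m}$ after a short manipulation. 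Summing over $i$ gives $\le \epsilon$, as desired.

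\textbf{Main obstacle.} The genuinely delicate point is the inductive/geometric reduction that replaces the full dual-lattice sum by the one-dimensional sums over the Gram--Schmidt lengths — i.e.\ justifying $\rho_r(\Lambda^* \setminus \{\mathbf 0\}) \le \sum_i \sum_{k\ne 0}\rho_r(k\|\gs{\vecd_i}\|)$. This needs the monotonicity fact that translating a centered lattice Gaussian off-center can only decrease total mass, applied along the orthogonal complement of $\gs{\vecd_j}$, together with careful bookkeeping of which points have a given top nonzero coordinate so that nothing is double-counted. The one-dimensional tail estimate and the choice of $s$ are then routine calculus; I would state the monotonicity lemma explicitly (it is classical) and cite or include a one-line proof via Poisson summation or the Fourier-positivity of $\rho$. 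Since this is a known result, I expect the write-up to be short, essentially reproducing the GPV argument with the dual Gram--Schmidt identity made explicit.
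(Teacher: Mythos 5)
The paper does not prove this lemma — it is quoted verbatim as \cite{gentry2008trapdoors}, Lemma 3.1, so there is no in-paper argument to compare against. I will therefore just assess the correctness of your proposed proof.

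Your overall route (pass to the dual, use the Gram--Schmidt duality $\|\gs{\vecd_i}\| = 1/\|\gs{\vecb_{m+1-i}}\|$, decompose $\rho_r(\Lambda^*)$ along the top nonzero dual coordinate, and invoke the off-center-mass monotonicity $\rho_r(\Lambda'+\vecc)\le\rho_r(\Lambda')$) is the right one and is essentially how this bound is proved. However, the ``clean bound'' you claim after iterating,
$\rho_r(\Lambda^*\setminus\{\mathbf 0\}) \le \sum_{i}\sum_{k\ne 0}\rho_r\bigl(k\|\gs{\vecd_i}\|\bigr)$,
is \emph{false}, and this is a genuine gap, not a bookkeeping issue. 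Writing $c_i := \sum_{k\ne 0}\rho_r(k\|\gs{\vecd_i}\|)$ and $T_j := \rho_r(\Z\vecd_1+\cdots+\Z\vecd_j)$, the decomposition you describe gives, for each top index $j$, a contribution bounded by $c_j\cdot T_{j-1}$ with $T_{j-1}\ge 1$; iterating therefore yields the \emph{multiplicative} inequality $\rho_r(\Lambda^*) \le \prod_i(1+c_i)$, equivalently $\rho_r(\Lambda^*\setminus\{\mathbf 0\})\le \prod_i(1+c_i)-1 = \sum_i c_i\prod_{\ell<i}(1+c_\ell)$, which strictly exceeds $\sum_i c_i$ whenever $m\ge 2$ and the $c_i>0$. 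Concretely, with $\Lambda^*=\Z^2$ and $r=1$ one has $\rho_1(\Z^2\setminus\{\mathbf 0\})=\rho_1(\Z)^2-1\approx 0.180 > 2\bigl(\rho_1(\Z)-1\bigr)\approx 0.173 = c_1+c_2$, contradicting your inequality. The step where you ``iterate'' silently replaces each $T_{j-1}$ by $1$, which is not licensed by the monotonicity lemma (that lemma only caps $\rho_r$ of a shifted coset by $\rho_r$ of the centered lattice, which is still $\ge 1$).

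This gap also breaks your closing calculation: showing each $c_i\le\epsilon/m$ and then ``summing over $i$'' is not enough, because what you actually need is $\prod_i(1+c_i)\le 1+\epsilon$, and $(1+\epsilon/m)^m-1$ can exceed $\epsilon$. To repair the proof you should keep the product form and use the sharper per-factor bound $c_i \le 2a/(1-a)$ with $a = e^{-\pi t^2} = \epsilon/\bigl(2m(1+\epsilon)\bigr)$, $t=\sqrt{\log(2m(1+1/\epsilon))/\pi}$, so that each factor is at most $(1+a)/(1-a)$ and then verify directly that $\bigl((1+a)/(1-a)\bigr)^m \le 1+\epsilon$ (e.g.\ by comparing $2m\,\operatorname{artanh}(a)$ with $\log(1+\epsilon)$). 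That extra half page is exactly where the content of the lemma lives, and it cannot be elided by replacing the product with a sum.
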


\begin{lemma}[\cite{micciancio2007worst}]\label{lem:smooth}
	Let $\Lambda$ be a lattice, $\vecc\in \spn(\Lambda)$.
	Then for any $\epsilon\in(0,\frac{1}{2})$ and $s\geq\smootheps(\Lambda)$, $\rho_{s}(\Lambda + \vecc) \in \left[\frac{1-\epsilon}{1+\epsilon}, 1\right]\rho_{s}(\Lambda)$.
\end{lemma}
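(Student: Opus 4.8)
The plan is to establish both inequalities via the Poisson summation formula applied to the Gaussian $\rho_s$. Since $\vecc\in\spn(\Lambda)$, every sum that appears is supported on $\spn(\Lambda)$, so after an isometry I may assume $\Lambda$ is a full-rank lattice in $\R^n$ with $n=\dim\Lambda$. Recalling that the Fourier transform of $\rho_s$ is $s^n\rho_{1/s}$, Poisson summation over the coset $\Lambda+\vecc$ gives
\[
\rho_s(\Lambda+\vecc)=\frac{s^n}{\det\Lambda}\sum_{\vecw\in\Lambda^*}\rho(s\vecw)\,e^{2\pi i\langle\vecc,\vecw\rangle}.
\]
Pairing each $\vecw$ with $-\vecw$ (equal Gaussian weight, conjugate phases) rewrites the right-hand side as the manifestly real quantity $\frac{s^n}{\det\Lambda}\sum_{\vecw\in\Lambda^*}\rho(s\vecw)\cos(2\pi\langle\vecc,\vecw\rangle)$; in particular, setting $\vecc=\veczero$ recovers $\rho_s(\Lambda)=\frac{s^n}{\det\Lambda}\,\rho(s\Lambda^*)$.

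From here the two bounds drop out. For the upper bound I would use $\cos(\cdot)\le 1$ term by term, which yields $\rho_s(\Lambda+\vecc)\le\frac{s^n}{\det\Lambda}\rho(s\Lambda^*)=\rho_s(\Lambda)$, needing no hypothesis on $s$. For the lower bound I would isolate the $\vecw=\veczero$ term (contributing $1$) and bound the remainder below using $\cos(\cdot)\ge -1$, obtaining $\rho_s(\Lambda+\vecc)\ge\frac{s^n}{\det\Lambda}\bigl(2-\rho(s\Lambda^*)\bigr)$. Invoking the hypothesis $s\ge\smootheps(\Lambda)$ (which by definition of the smoothing parameter and monotonicity of $s\mapsto\rho(s\Lambda^*)$ gives $\rho(s\Lambda^*)\le 1+\epsilon$), this is $\ge\frac{s^n}{\det\Lambda}(1-\epsilon)$, while the same hypothesis gives $\rho_s(\Lambda)\le\frac{s^n}{\det\Lambda}(1+\epsilon)$. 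Dividing the two estimates yields $\rho_s(\Lambda+\vecc)/\rho_s(\Lambda)\ge(1-\epsilon)/(1+\epsilon)$, which together with the upper bound is exactly the claim.

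The argument has no genuinely hard step: its shape is forced once Poisson summation is in place. The only points needing care are bookkeeping — fixing the normalization of $\widehat{\rho_s}$ and the $1/\det\Lambda$ factor in the summation formula — and checking that the rearrangement pairing $\vecw$ with $-\vecw$ is legitimate, which follows from absolute convergence thanks to the Gaussian decay of $\rho(s\cdot)$ on $\Lambda^*$. (Note that $\epsilon<\tfrac12$ is not actually needed for the stated inequalities; it only serves downstream to keep $(1-\epsilon)/(1+\epsilon)$ bounded away from $0$.)
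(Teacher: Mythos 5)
Your argument is correct, and since the paper only cites this lemma from Micciancio--Regev~\cite{micciancio2007worst} without reproducing a proof, the relevant comparison is to that source: your Poisson-summation derivation (shifted summation formula with phases $e^{2\pi i\langle\vecc,\vecw\rangle}$, pairing $\vecw$ with $-\vecw$ to get cosines, upper bound from $\cos\le 1$, lower bound by isolating $\vecw=\veczero$ and using $\cos\ge-1$ plus the smoothing hypothesis) is precisely the standard proof there. The bookkeeping is right: $\widehat{\rho_s}=s^n\rho_{1/s}$, the reduction to full rank is legitimate because $\vecc\in\spn(\Lambda)$ keeps all sums inside $\spn(\Lambda)$ and $\rho_s$ is rotation-invariant, the monotonicity of $s\mapsto\rho(s\Lambda^*)$ justifies $\rho(s\Lambda^*)\le 1+\epsilon$ for $s\ge\smootheps(\Lambda)$, and your aside that $\epsilon<\tfrac12$ is not needed for the stated bracket (any $\epsilon<1$ suffices to make the lower bound nonvacuous) is also accurate.
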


\begin{lemma}[\cite{gentry2008trapdoors}, Corollary 2.8]\label{lem:smooth_uniform}
	Let $\Lambda,\Lambda'$ be two lattices such that $\Lambda'\subseteq\Lambda$.
	Let $s\geq\eta_{\epsilon}(\Lambda')$. Then for any $\epsilon\in(0,\frac{1}{2})$ and $\mathbf c\in \spn(\Lambda)$, the distribution of $(D_{\Lambda+\vecc,s}\mod\Lambda')$ is within statistical distance at most $2\epsilon$ of $U(\Lambda\mod\Lambda')$.
\end{lemma}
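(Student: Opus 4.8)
The plan is to reduce the statement to the smoothing estimate of Lemma~\ref{lem:smooth} applied to the sublattice $\Lambda'$. First I would observe that, since $\Lambda'\subseteq\Lambda$ with finite index (which is implicit here, as otherwise $U(\Lambda\bmod\Lambda')$ is not a well-defined distribution), the coset $\Lambda+\vecc$ decomposes as a disjoint union $\bigsqcup_{\vect}(\Lambda'+\vect)$, where $\vect$ ranges over a complete set of representatives of the finite quotient group $\Lambda/\Lambda'$, say of size $N:=\lvert\Lambda/\Lambda'\rvert$. Under this identification, the pushforward of $D_{\Lambda+\vecc,s}$ modulo $\Lambda'$ assigns to the class of $\vect$ the probability
\[
P(\vect)=\frac{\rho_s(\Lambda'+\vect)}{\rho_s(\Lambda+\vecc)}=\frac{\rho_s(\Lambda'+\vect)}{\sum_{\vecw}\rho_s(\Lambda'+\vecw)},
\]
directly from the definition of the discrete Gaussian together with the disjointness of the cosets (the sum in the denominator running over the same representatives $\vecw$).

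Next, because $s\geq\eta_{\epsilon}(\Lambda')$ and every representative lies in $\spn(\Lambda')=\spn(\Lambda)$, Lemma~\ref{lem:smooth} applies to each coset and gives $\rho_s(\Lambda'+\vect)\in\left[\frac{1-\epsilon}{1+\epsilon},\,1\right]\cdot\rho_s(\Lambda')$ simultaneously for all $\vect$. Summing this over the $N$ cosets yields $\rho_s(\Lambda+\vecc)\in\left[\frac{1-\epsilon}{1+\epsilon},\,1\right]\cdot N\rho_s(\Lambda')$; dividing the two bounds gives the pointwise estimate
\[
P(\vect)\in\left[\frac{1-\epsilon}{1+\epsilon},\,\frac{1+\epsilon}{1-\epsilon}\right]\cdot\frac1N
\]
for every $\vect$, i.e. $P$ agrees with the uniform distribution $U$ on $\Lambda/\Lambda'$ up to a $(1\pm O(\epsilon))$ multiplicative factor.

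Finally I would convert this into a statistical-distance bound. From the display above, $\bigl|P(\vect)-1/N\bigr|\leq\frac1N\cdot\max\!\left(1-\frac{1-\epsilon}{1+\epsilon},\,\frac{1+\epsilon}{1-\epsilon}-1\right)=\frac1N\cdot\frac{2\epsilon}{1-\epsilon}$, so
\[
\Delta\bigl(D_{\Lambda+\vecc,s}\bmod\Lambda',\,U(\Lambda\bmod\Lambda')\bigr)=\tfrac12\sum_{\vect}\bigl|P(\vect)-1/N\bigr|\leq\frac{\epsilon}{1-\epsilon}\leq2\epsilon,
\]
where the last inequality uses $\epsilon\in(0,\tfrac12)$. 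I do not expect a genuine obstacle: the only points to watch are that $U(\Lambda\bmod\Lambda')$ requires $\Lambda'$ to have finite index in $\Lambda$, that the hypothesis $s\geq\eta_{\epsilon}(\Lambda')$ is exactly what lets Lemma~\ref{lem:smooth} be invoked \emph{uniformly over all cosets} rather than for a single shift, and that the constant must be tracked carefully so the final bound comes out as $2\epsilon$ rather than the slightly larger $\epsilon/(1-\epsilon)$. With these in place the argument is a short direct computation.
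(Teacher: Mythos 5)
Your proof is correct, and it is essentially the standard argument behind GPV's Corollary~2.8 (the paper cites this result without reproducing a proof): decompose $\Lambda+\vecc$ into cosets of $\Lambda'$, invoke the smoothing bound (Lemma~\ref{lem:smooth}) uniformly across cosets to pin each coset's mass within a $\left[\frac{1-\epsilon}{1+\epsilon},\frac{1+\epsilon}{1-\epsilon}\right]$ multiplicative window of $1/N$, and then sum to obtain the statistical-distance bound $\frac{\epsilon}{1-\epsilon}\leq 2\epsilon$. One small wording slip worth fixing: you let $\vect$ range over representatives of $\Lambda/\Lambda'$ but then write the cosets as $\Lambda'+\vect$ (which sit inside $\Lambda$, not $\Lambda+\vecc$); either let $\vect$ range over representatives of $(\Lambda+\vecc)\bmod\Lambda'$, or write the cosets as $\Lambda'+\vect+\vecc$ — the computation is unaffected either way since $\vect+\vecc\in\spn(\Lambda')$ so Lemma~\ref{lem:smooth} still applies.
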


\begin{theorem}[\cite{genise2020improved}]\label{thm:lineartransform}
    For any $\epsilon\in[0,1)$ defining $\bar{\epsilon}=2\epsilon/(1-\epsilon)$, a matrix $\mathbf S$ of full column rank, a lattice coset $A=\Lambda+\mathbf a\subset\mathrm{span}(\matS)$, and a matrix $\matT$ such that $\ker(\matT)$ is a $\Lambda$-subspace and $\eta_{\epsilon}(\Lambda\cap\ker(\matT))\leq\mathbf S$, we have
    $$
    \matT\cdot D_{A,\matS}\approx_{\bar{\epsilon}}D_{\matT A,\matT\matS}.
    $$
\end{theorem}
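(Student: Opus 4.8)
The plan is to understand the pushforward $\mathbf{T}\cdot D_{A,\mathbf{S}}$ pointwise on its support and compare it to $D_{\mathbf{T}A,\mathbf{T}\mathbf{S}}$ through a multiplicative (ratio) bound, which then converts to the claimed statistical distance. Write $\Sigma=\mathbf{S}\mathbf{S}^t$, $W=\mathrm{span}(\mathbf{S})$, $\Lambda_0=\Lambda\cap\ker(\mathbf{T})$, and let $V=\ker(\mathbf{T})\cap W$; the hypothesis that $\ker(\mathbf{T})$ is a $\Lambda$-subspace says precisely that $V=\mathrm{span}(\Lambda_0)$, and (since $\Lambda_0$ is then primitive in $\Lambda$) it guarantees that $\mathbf{T}\Lambda$ is again a lattice, that $\mathbf{T}A=\mathbf{T}\Lambda+\mathbf{T}\mathbf{a}$ is one of its cosets inside $\mathrm{span}(\mathbf{T}\mathbf{S})=\mathbf{T}W$, and that for every $\mathbf{y}\in\mathbf{T}A$ the fiber $A\cap\mathbf{T}^{-1}(\mathbf{y})$ is a coset $\Lambda_0+\mathbf{w}$ for any fixed preimage $\mathbf{w}$.

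First I would write, for $\mathbf{y}\in\mathbf{T}A$,
\[
(\mathbf{T}\cdot D_{A,\mathbf{S}})(\mathbf{y})=\frac{\rho_{\mathbf{S}}(\Lambda_0+\mathbf{w})}{\rho_{\mathbf{S}}(A)},
\]
and factor the numerator using a Pythagorean decomposition with respect to the inner product $\langle\cdot,\cdot\rangle_{\Sigma^{-1}}$ on $W$. Splitting $\mathbf{w}=\mathbf{w}_V+\mathbf{w}_\perp$, where $\mathbf{w}_V$ is the $\Sigma^{-1}$-orthogonal projection of $\mathbf{w}$ onto $V$ and $\mathbf{w}_\perp$ lies in the $\Sigma^{-1}$-orthogonal complement of $V$ inside $W$, each $\mathbf{v}\in\Lambda_0\subset V$ yields $\rho_{\mathbf{S}}(\mathbf{v}+\mathbf{w})=\rho_{\mathbf{S}}(\mathbf{w}_\perp)\cdot\rho_{\mathbf{S}}(\mathbf{v}+\mathbf{w}_V)$, hence $\rho_{\mathbf{S}}(\Lambda_0+\mathbf{w})=\rho_{\mathbf{S}}(\mathbf{w}_\perp)\cdot\rho_{\mathbf{S}}(\Lambda_0+\mathbf{w}_V)$. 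The key linear-algebra identity to establish is $\rho_{\mathbf{S}}(\mathbf{w}_\perp)=\rho_{\mathbf{T}\mathbf{S}}(\mathbf{y})$: since $\mathbf{T}$ maps the $\Sigma^{-1}$-orthogonal complement of $V$ in $W$ bijectively onto $\mathbf{T}W$, $\mathbf{w}_\perp$ is the unique vector there with $\mathbf{T}\mathbf{w}_\perp=\mathbf{T}\mathbf{w}=\mathbf{y}$, and transporting the quadratic form $\Sigma^{-1}$ restricted to that complement through $\mathbf{T}$ produces exactly $\big((\mathbf{T}\mathbf{S})(\mathbf{T}\mathbf{S})^t\big)^{+}$, the form defining $\rho_{\mathbf{T}\mathbf{S}}$ (both sides equal $1$ at the origin, which fixes the normalization). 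Equivalently, this is the standard fact that the $\rho_{\mathbf{S}}$-mass of an affine fiber of $\mathbf{T}$, viewed as a function of the fiber, is proportional to $\rho_{\mathbf{T}\mathbf{S}}$.

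Next I would apply smoothing: the hypothesis $\eta_\epsilon(\Lambda_0)\leq\mathbf{S}$ means $\mathbf{S}$, restricted to $V=\mathrm{span}(\Lambda_0)$, is above the smoothing parameter of $\Lambda_0$, so the matrix version of Lemma~\ref{lem:smooth} gives $\rho_{\mathbf{S}}(\Lambda_0+\mathbf{w}_V)\in\big[\tfrac{1-\epsilon}{1+\epsilon},1\big]\cdot\rho_{\mathbf{S}}(\Lambda_0)$. Combining the three facts,
\[
(\mathbf{T}\cdot D_{A,\mathbf{S}})(\mathbf{y})\in\Big[\tfrac{1-\epsilon}{1+\epsilon},\,1\Big]\cdot\frac{\rho_{\mathbf{S}}(\Lambda_0)}{\rho_{\mathbf{S}}(A)}\cdot\rho_{\mathbf{T}\mathbf{S}}(\mathbf{y}),
\]
with the factor $\rho_{\mathbf{S}}(\Lambda_0)/\rho_{\mathbf{S}}(A)$ independent of $\mathbf{y}$. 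Since both $\mathbf{T}\cdot D_{A,\mathbf{S}}$ and $D_{\mathbf{T}A,\mathbf{T}\mathbf{S}}$ are probability distributions supported on $\mathbf{T}A$, summing this bound over $\mathbf{y}$ pins down the hidden constant and forces the ratio $(\mathbf{T}\cdot D_{A,\mathbf{S}})(\mathbf{y})/D_{\mathbf{T}A,\mathbf{T}\mathbf{S}}(\mathbf{y})$ into $\big[\tfrac{1-\epsilon}{1+\epsilon},\tfrac{1+\epsilon}{1-\epsilon}\big]$ for every $\mathbf{y}$, whence the statistical distance is at most $\tfrac12\big(\tfrac{1+\epsilon}{1-\epsilon}-1\big)=\tfrac{\epsilon}{1-\epsilon}\leq\bar{\epsilon}$.

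The step I expect to be the main obstacle is the degenerate linear algebra behind $\rho_{\mathbf{S}}(\mathbf{w}_\perp)=\rho_{\mathbf{T}\mathbf{S}}(\mathbf{y})$: one must work consistently inside $W=\mathrm{span}(\mathbf{S})$ with the possibly singular forms $\Sigma$ and $(\mathbf{T}\mathbf{S})(\mathbf{T}\mathbf{S})^t$, handle pseudo-inverses correctly, and verify that the $\Sigma^{-1}$-orthogonal complement of $V$ in $W$ is carried isomorphically onto $\mathbf{T}W$ by $\mathbf{T}$ with the two quadratic forms matching. The remaining ingredients -- the coset/fiber bookkeeping made available by the $\Lambda$-subspace hypothesis, the Pythagorean split, the smoothing estimate, and the passage from a multiplicative ratio bound to statistical distance -- are routine given the lemmas already recalled.
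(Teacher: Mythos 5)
This paper states the theorem as an imported result from~\cite{genise2020improved} and does not reproduce a proof, so there is no in-paper argument to compare against. Your reconstruction is correct and follows the same route as the cited source: identifying the fibers of $\matT$ over $\matT A$ as cosets of $\Lambda_0=\Lambda\cap\ker(\matT)$, splitting each coset representative $\Sigma^{-1}$-orthogonally against $V=\mathrm{span}(\Lambda_0)$ to peel off the factor $\rho_{\matS}(\vecw_\perp)=\rho_{\matT\matS}(\vecy)$ (the Gaussian marginal identity, which amounts to $\min_{\matT\vecx=\vecy,\,\vecx\in W}\vecx^t\Sigma^+\vecx=\vecy^t(\matT\Sigma\matT^t)^+\vecy$), smoothing the remaining coset sum over $\Lambda_0$ via the hypothesis $\eta_\epsilon(\Lambda_0)\leq\matS$, and converting the resulting multiplicative ratio bound $[\tfrac{1-\epsilon}{1+\epsilon},\tfrac{1+\epsilon}{1-\epsilon}]$ into a statistical-distance bound after using normalization to fix the constant $\rho_{\matS}(\Lambda_0)/\rho_{\matS}(A)$. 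Your flag about the degenerate/pseudo-inverse bookkeeping inside $W=\mathrm{span}(\matS)$ is exactly the technical care the original proof takes, and the $\Lambda$-subspace hypothesis is used precisely where you use it (so that $\matT\Lambda$ is a lattice and the fibers are cosets of $\Lambda_0$). No gaps.
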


\subsection{The ring $\Z[x]/(x^n\pm 1)$}
We work with two polynomial rings in the paper.
The first one is the convolution ring $\cR_{n}^{-} = \Z[x]/(x^n - 1)$ where $n$ is a prime.
For any $a = \sum_{i=0}^{n-1} a_ix^i \in \cR_{n}^{-}$,
let $v(a) = (a_0,a_1\cdots,a_{n-1})$ be its coefficient vector and
the circulant matrix
\[
\mathcal{M}(a) = \begin{bmatrix}
	a_0 & a_{n-1} & \cdots & a_1\\
	a_1 & a_{0} & \cdots & a_2\\
	\vdots & \vdots & \vdots & \vdots \\
	a_{n-1} & a_{n-2} & \cdots & a_0 \\
\end{bmatrix} = [v(a),v(a\cdot x),\cdots, v(a\cdot x^{n-1})].
\]
be its matrix form.
The second ring in the paper is the power-of-$2$ cyclotomic ring, i.e.
$\cR_{n}^{+} = \Z[x]/(x^n + 1)$ with $n$ a power of $2$. For $a = \sum_{i=0}^{n-1} a_ix^i \in \cR_{n}^{+}$, its coefficient vector is also written as $v(a)$ and the matrix form becomes an anticirculant matrix
\[
\mathcal{M}(a) = \begin{bmatrix}
	a_0 & -a_{n-1} & \cdots & -a_1\\
	a_1 & a_{0} & \cdots & -a_2\\
	\vdots & \vdots & \vdots & \vdots \\
	a_{n-1} & a_{n-2} & \cdots & a_0 \\
\end{bmatrix} = [v(a),v(a\cdot x),\cdots, v(a\cdot x^{n-1})].
\]
In the rest of the paper, we identify $a$ with $v(a)$ when the context is clear.

Let $\bar a = a(x^{-1})$ for $a \in \cR$, then $\bar a = a_0 + \sum_{i=1}^{n-1} a_{n-i}x^i$ when $\cR = \cR_{n}^{-}$ and $\bar a = a_0 - \sum_{i=1}^{n-1} a_{n-i}x^i$ when $\cR = \cR_{n}^{+}$. More generally, let $\sigma_k(a) = a(x^k)$ for $k \in \Z_n^*$.
For both $\cR_{n}^{-}$ and $\cR_{n}^{+}$, the following properties hold:
\begin{itemize}
	\item $\mathcal{M}(a)+\mathcal{M}(b) = \mathcal{M}(a+b)$
	\item $\mathcal{M}(a)\cdot\mathcal{M}(b) = \mathcal{M}(ab)$.
	\item $\mathcal{M}(\bar a) = \mathcal{M}(a)^t$
\end{itemize}

\subsection{NTRU}
The NTRU module determined by $h \in \cR$ is given by
$$\lat^h_{NTRU}=\{(u,v)\in\mathcal R^2:uh-v=0\mod Q\}.$$
Our NTRU-based scheme mainly uses $\cR=\cR_n^-$, and the NTRU module is seen as a lattice of dimension $2n$.

In typical NTRU-based cryptosystems, the secret key is composed of two short polynomials $f,g \in \cR$, while the public key is $h = f^{-1}g\mod Q$. Then $(f,g)$ is a short vector of $\lat^h_{NTRU}$. In addition, an \emph{inhomogeneous} version of NTRU was introduced in~\cite{genise2019homomorphic}. In this version, the public key $h = f^{-1}(g+e)\bmod Q$ where $e$ is a public constant.
The corresponding problems are defined as follows.

\begin{definition}[NTRU and inhomogeneous NTRU]\label{def:ntru}
	Let $\cR=\Z[x]/(x^n - 1)$ with $n$ a prime.
	Let $Q > 0$ be an integer and $\chi$ be a distribution over $\cR$.
	Let $D_{\chi}$ (resp. $D_{\chi,e}$) be the distribution of the NTRU public key $h = f^{-1}g\mod Q$ (resp. $h = \frac{g+e}{f} \bmod Q$) with $f,g\gets \chi$.
	\begin{itemize}
		\item $\mathsf{NTRU}_{\cR,Q,\chi}$: Given $h \gets D_{\chi}$, find short $(f,g)$ such that $h = f^{-1}g\mod Q$.
		\item $\mathsf{iNTRU}_{\cR,Q,\chi,e}$: Given $h \gets D_{\chi,e}$, find short $(f,g)$ such that $h = \frac{g+e}{f}\mod Q$.
	\end{itemize}
\end{definition}

\subsection{LWE}
The LWE (learning with errors) problem is defined as follows.

\begin{definition}[LWE]
	Let $n,m,Q>0$ be integers and $\chi$ be a distribution over $\Z$. Given $\vecs \in \Z_Q^n$, let $A_{\vecs,\chi}$ be the distribution of $(\veca, b)$ where $\veca \overset{\$}{\leftarrow} \Z_Q^n$ and $b = \inner{\veca,\vecs}+e\bmod Q$ with $e \gets \chi$.
	\begin{itemize}
		\item Decision-$\mathsf{LWE}_{n,m,Q,\chi}$:
		Given $m$ independent samples from either $A_{\vecs,\chi}$ with $\vecs\gets \chi$ (fixed for all $m$ samples) or $U(\Z_Q^n\times\Z_Q)$, distinguish which is the case.
		\item Search-$\mathsf{LWE}_{n,m,Q,\chi}$: Given $m$ independent samples from $A_{\vecs,\chi}$ with $\vecs\gets \chi$, find $\vecs$.
	\end{itemize}
\end{definition}

To improve the efficiency and key sizes, some algebraic variants of LWE were proposed and used to build practical lattice-based cryptosystems. In this paper, we mainly use the ring variant proposed in~\cite{lyubashevsky2010ideal}.
\begin{definition}[Ring-LWE]
	Let $\cR = \Z[x]/(x^n+1)$ with $n$ a power of $2$.
	Let $m,Q>0$ be integers and $\chi$ be a distribution over $\cR$.
	Let $\cR_Q = \cR / (Q\cdot\cR)$.
	Given $s \in \cR_Q$, let $A_{s,\chi}$ be the distribution of $(a, b)$ where $a \overset{\$}{\leftarrow} \cR_Q$ and $b = as+e\bmod Q$ with $e \gets \chi$.
	\begin{itemize}
		\item Decision-$\mathsf{RLWE}_{\cR,m,Q,\chi}$:
		Given $m$ independent samples from either $A_{s,\chi}$ with $s\gets \chi$ (fixed for all $m$ samples) or $U(\cR_Q\times\cR_Q)$, distinguish which is the case.
		\item Search-$\mathsf{RLWE}_{\cR,m,Q,\chi}$: Given $m$ independent samples from $A_{s,\chi}$ with $s\gets \chi$, find $s$.
	\end{itemize}
\end{definition}

\subsection{SIS}
We recall the SIS (short integer solution) problem and its inhomogeneous variant.
\begin{definition}[SIS and inhomogeneous SIS]
	Let $n,m,Q>0$ be integers and $\beta > 0$.
	\begin{itemize}
		\item $\mathsf{SIS}_{n,m,Q,\beta}$: Given a uniformly random $\matA \in \Z_Q^{n\times m}$, find a non-zero integer vector $\vecx$ such that $\matA\vecx = \veczero \bmod Q$ and $\|\vecx\| \leq \beta$.
		\item $\mathsf{ISIS}_{n,m,Q,\beta}$: Given a uniformly random $\matA \in \Z_Q^{n\times m}$ and $\vecy \in \Z_Q^n$, find a non-zero integer vector $\vecx$ such that $\matA\vecx = \vecy \bmod Q$ and $\|\vecx\| \leq \beta$.
	\end{itemize}
\end{definition}

The public matrix $\matA$ in SIS and ISIS problems can be in the Hermite normal form (HNF), i.e. $\matA = [\matI_n \mid \matA']$.
This gives the HNF version of SIS problems, $\mathsf{HNF.SIS}$ and $\mathsf{HNF.ISIS}$. Such variants are as hard as the standard version.

The ring variants of SIS and ISIS are immediate. We only show the definition of Ring-ISIS.
\begin{definition}[Ring-ISIS, $\mathsf{RISIS}_{\cR,m,Q,\beta}$]
	Let $\cR = \Z[x]/(x^n+1)$ with $n$ a power of $2$.
	Let $m,Q>0$ be integers and $\beta > 0$.
	Let $\cR_Q = \cR / (Q\cdot\cR)$.
	Given a uniformly random $\matA \in \cR_Q^{m}$ and $y \in \cR_Q$, find a non-zero integer vector $\vecx \in \cR^m$ such that $\matA\vecx = y \bmod Q$ and $\|\vecx\| \leq \beta$.
\end{definition}

The approximate version of ISIS was introduced in~\cite{chen2019approximate}. It can be immediately adapted to the ring version $\mathsf{ApproxRISIS}_{\cR,m,Q,\alpha,\beta}$ and the HNF version $\mathsf{HNF.ApproxISIS}_{\cR,m,Q,\alpha,\beta}$.
\begin{definition}[Approximate ISIS, $\mathsf{ApproxISIS}_{n,m,Q,\alpha,\beta}$]
	Let $n,m,Q>0$ be integers and $\beta > 0$. Given a uniformly random $\matA \in \Z_Q^{n\times m}$ and a random $\vecy \in \Z_Q^n$, find an integer vector $\vecx$ such that $\matA\vecx = \vecy - \vece \bmod Q$ with $\|\vece\| \leq \alpha$ and $\|\vecx\| \leq \beta$.
\end{definition}

We will also use an NTRU version of SIS. It is the underlying assumption of NTRU-based signatures~\cite{falcon,espitau2022simpler,ducas2013lattice}. The NTRU-SIS problem can be immediately adapted to the inhomogeneous version $\mathsf{NTRUISIS}_{\cR,Q,\chi,\beta}$ and the approximate version $\mathsf{ApproxNTRUISIS}_{\cR,Q,\chi,\alpha,\beta}$.
\begin{definition}[NTRU-SIS, $\mathsf{NTRUSIS}_{\cR,Q,\chi,\beta}$]
	Let $\cR=\Z[x]/(x^n - 1)$ with $n$ a prime.
	Let $Q > 0$ be an integer, $\chi$ be a distribution over $\cR$ and $\beta > 0$.
	Given a random NTRU public key $h$ of either $\mathsf{NTRU}_{\cR,Q,\chi}$ or $\mathsf{iNTRU}_{\cR,Q,\chi,e}$, find a non-zero vector $(x_0,x_1)$ such that
	$\|(x_0,x_1)\|\leq \beta$ and $x_0+hx_1 = 0 \bmod Q$.
\end{definition}

\section{Recall the Gadget Trapdoors}\label{sec:recall}
While Ajtai's function $f_\matA$ is hard to invert for a random matrix $\matA$,
the inversion $f^{-1}_\matA$ can be easily computed with a short trapdoor.
The most famous and efficient lattice trapdoors are based on the lattice gadget framework developed in~\cite{micciancio2012trapdoors}.
In a gadget trapdoor scheme, the inversion of $f_\matA$ is converted into the
gadget inversion, i.e. the inversion of $f_\matG$ for a gadget matrix $\matG$.
The gadget inversion turns out to be highly simple and fast for some well-designed $\matG$. For better completeness and contrast, let us briefly recall the classical gadget trapdoor from~\cite{micciancio2012trapdoors} and its approximate variant from~\cite{chen2019approximate}.

\subsection{Exact gadget trapdoor from~\cite{micciancio2012trapdoors}}\label{subsec:MP12_gadget}
The earliest and most widely used gadget trapdoor is proposed by Micciancio and Peikert in~\cite{micciancio2012trapdoors}.
In the Micciancio-Peikert trapdoor, the gadget matrix is
$\matG = \matI_n \otimes \vecg^t\in\mathbb Z^{n\times m'}$ where $\vecg = (1, b,\cdots,b^{k-1}),k = \ceil{\log_b(Q)}$ and $m'=nk$.
The public matrix is
\[\matA = [\bar{\matA} \mid \matG - \bar{\matA}\matR] \in \Z_Q^{n\times m}\]
where $\bar{\matA}\in\mathbb Z_Q^{n\times\bar m},m=\bar m+m'$ and
$\matR$ is a secret matrix of small entries such that $\bar{\matA}\matR$ is either statistically near-uniform or computationally pseudorandom under certain assumptions. In this paper, we are interested in the pseudorandom case that offers better practicality due to the  smaller dimension of $\matA$.

Let $\matT = \left[\begin{smallmatrix}
	\matR \\
	\matI
\end{smallmatrix}\right]$, then $\matA\matT = \matG\bmod Q$.
This linear relation gives a direct transformation from $f^{-1}_\matA$ to the gadget inversion $f^{-1}_\matG$:
given a target $\vecu \in \Z_Q^n$, $\vecx = \matT\vecx'$ is a short preimage of $f^{-1}_\matA(\vecu)$ when
$\vecx'$ is a short preimage of $f^{-1}_\matG(\vecu)$.
Many applications, e.g. digital signatures, also need the preimage distribution to be simulatable without using the trapdoor
for \emph{uniformly random} targets for security purpose.
To this end, a common approach is to make the preimage distribution statistically close to some Gaussian independent of the trapdoor by
adding some perturbation following the idea of~\cite{peikert2010efficient}.
More concretely, the inversion $f^{-1}_\matA(\vecu)$ in the Micciancio-Peikert framework proceeds as follows:
\begin{enumerate}
	\item \emph{(Perturbation sampling)} Sample $\vecp$ from
	$D_{\Z^m, \sqrt{\Sigma_p}}$ where $\Sigma_p = s^2\matI_m - r^2\matT\matT^t$
	\item Compute $\vecu' = \vecu - \matA\vecp\bmod Q$
	\item \emph{(Gadget sampling)} Sample $\vecx'$ from
	$D_{\Lambda^{\perp}_{Q,\vecu'}(\matG), r}$
	\item Output the preimage $\vecx = \vecp + \matT\vecx' \bmod Q$
\end{enumerate}
The required parameter conditions by the Gaussian sampling include $r \geq \smootheps(\Lambda^{\perp}_Q(\matG))$ and $s \geq r\cdot s_1(\matT)$.

\subsection{Approximate gadget trapdoor from~\cite{chen2019approximate}}\label{subsec:CGM19_gadget}
In~\cite{chen2019approximate}, Chen, Genise and Mukherjee introduced the notion of \emph{approximate trapdoor}.
Such a trapdoor allows to approximately invert Ajtai's function $f_\matA$, i.e. to find a short preimage $\vecx$ of $f^{-1}_\matA(\vecu)$ such that $\matA\vecx = \vecu - \vece \mod Q$ for some short $\vece$. The vector $\vece$ is termed \emph{approximate error} or simply \emph{error}.
An approximate variant of the Micciancio-Peikert gadget trapdoor was given  in~\cite{chen2019approximate}.
In the Chen-Genise-Mukherjee trapdoor, the gadget matrix is $\matF = \matI_n \otimes \vecf^t\in\mathbb Z^{n\times m'}$ where $\vecf = (b^l, b^{l+1},\cdots,b^{k-1})$ is truncated from the gadget $\vecg$ in the exact case, and $m'=n(k-l)$. The public matrix accordingly becomes
\[\matA = [\bar{\matA} \mid \matF - \bar{\matA}\matR] \in \Z_Q^{n\times m},\]
where $m=\bar m+m'$.
Compared to the exact gadget, the approximate variant substantially reduces the dimension of $\matA$ and thus leads to more practical hash-and-sign signatures.

Let $\matT = \left[\begin{smallmatrix}
	\matR \\
	\matI
\end{smallmatrix}\right]$ and $\matD = \matI_n \otimes \vecd^t$ where $\vecd = (1, b,\cdots,b^{l-1})$. Then the exact gadget $\matG = \matI_n\otimes[\vecd^t \mid \vecf^t]$.
The approximate inversion follows the spirit of transforming $f^{-1}_\matA$ to the (approximate) gadget inversion. Given a target $\vecu$, it proceeds as follows:
\begin{enumerate}
	\item \emph{(Perturbation sampling)} Sample $\vecp$ from
	$D_{\Z^m, \sqrt{\Sigma_p}}$ where $\Sigma_p = s^2\matI_m - r^2\matT\matT^t$
	\item Compute $\vecu' = \vecu - \matA\vecp\bmod Q$
	\item \emph{(Gadget sampling)} Sample $\vecx'$ from
	$D_{\Lambda^{\perp}_{Q,\vecu'}(\matG), r}$
	\item \emph{(Preimage truncation)}
    Let $\vecx'=(\vecx'_1,\ldots,\vecx'_n)$ with $\vecx'_i\in\mathbb Z^k$. Set $\vecx''_i$ as the last $(k-l)$ entries of $\vecx'_i$ and $\vecx''=(\vecx''_1,\ldots,\vecx''_n)$
	\item Output the preimage $\vecx = \vecp + \matT\vecx'' \bmod Q$
\end{enumerate}
Let $\vecx' := (\vecx''',\vecx'')$, then the approximate error is
\[\vece = \vecu - \matA\vecx = \vecu - \matA\vecp - \matF\vecx'' = \vecu' - \matF\vecx'' = \matD\vecx'''\bmod Q. \]
For uniformly random $\vecu$, the distribution of $(\vecu,\vecx,\vece)$ can be simulated by sampling $\vecx \leftarrow D_{\Z^m,s}$ and $\vece \leftarrow D_{\Z^n,r\cdot\|\vecd\|}$ and then setting $\vecu = \matA\vecx+\vece\bmod Q$. The required parameter conditions include $r \geq \smootheps(\Lambda^{\perp}_Q(\matG))$ and $s \geq C\cdot r\cdot s_1(\matT)$ where $C$ is a small constant for commonly-used trapdoors.

\subsection{Equivalence between exact and approximate trapdoors}
Recall that the approximate trapdoor allows to sample a short preimage $\vecx$ such that $\matA\vecx+\vece=\vecu\bmod Q$ with a short error $\vece$.
When $\matA = [\matI_n \mid \matA']$, one can transform the  approximate preimage $\vecx = (\vecx_0,\vecx_1)$ and the error $\vece$
into an exact preimage $\vecx' = (\vecx_0 + \vece, \vecx_1)$ such that $\matA\vecx' = \vecu\bmod Q$.
Hence the exact and approximate trapdoors are somewhat equivalent from an algorithmic aspect.
This equivalence is characterized in the reduction form as follows.
\begin{lemma}[\cite{chen2019approximate}, Lemma 3.5, adapted]\label{lemma:equivalence}
	For $n,m,Q\in\Z,\alpha,\beta\geq 0$
	\begin{itemize}
		\item $\mathsf{HNF.ApproxISIS}_{n,m,Q,\alpha,\beta}\leq_p\mathsf{HNF.ISIS}_{n,m,Q,\beta}$ for any $\alpha\geq0$
		\item $\mathsf{HNF.ISIS}_{n,m,Q,\alpha+\beta}\leq_p\mathsf{HNF.ApproxISIS}_{n,m,Q,\alpha,\beta}$
	\end{itemize}
\end{lemma}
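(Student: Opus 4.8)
The plan is to prove the two reductions separately, each by an explicit, efficient transformation of instances and witnesses that exploits the Hermite normal form $\matA = [\matI_n \mid \matA']$. Throughout, write $\matA\vecx$ for $\matA\vecx \bmod Q$ and split any $\vecx \in \Z^m$ as $\vecx = (\vecx_0, \vecx_1)$ with $\vecx_0 \in \Z^n$ and $\vecx_1 \in \Z^{m-n}$, so that $\matA\vecx = \vecx_0 + \matA'\vecx_1 \bmod Q$.

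\emph{First reduction:} $\mathsf{HNF.ApproxISIS}_{n,m,Q,\alpha,\beta} \leq_p \mathsf{HNF.ISIS}_{n,m,Q,\beta}$. Given an $\mathsf{HNF.ApproxISIS}$ instance $(\matA, \vecy)$ with $\matA = [\matI_n \mid \matA']$, I would forward exactly the same pair $(\matA,\vecy)$ to the $\mathsf{HNF.ISIS}_{n,m,Q,\beta}$ oracle. It returns a nonzero $\vecx$ with $\matA\vecx = \vecy \bmod Q$ and $\|\vecx\| \leq \beta$. Output $(\vecx, \vece = \veczero)$: then $\matA\vecx = \vecy - \veczero \bmod Q$, $\|\vece\| = 0 \leq \alpha$ for any $\alpha \geq 0$, and $\|\vecx\| \leq \beta$, so $\vecx$ is a valid $\mathsf{HNF.ApproxISIS}$ solution. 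The distribution of $(\matA,\vecy)$ handed to the oracle is identical to the input distribution, so the reduction succeeds with the same probability. This direction is essentially trivial; the only thing to note is that the definition of $\mathsf{ApproxISIS}$ requires merely an integer vector $\vecx$ (not necessarily nonzero), so even a zero preimage would be acceptable, but in any case the oracle's output is nonzero.

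\emph{Second reduction:} $\mathsf{HNF.ISIS}_{n,m,Q,\alpha+\beta} \leq_p \mathsf{HNF.ApproxISIS}_{n,m,Q,\alpha,\beta}$. Given an $\mathsf{HNF.ISIS}_{n,m,Q,\alpha+\beta}$ instance $(\matA,\vecy)$ with $\matA = [\matI_n \mid \matA']$, forward $(\matA,\vecy)$ unchanged to the $\mathsf{HNF.ApproxISIS}_{n,m,Q,\alpha,\beta}$ oracle, which returns $\vecx = (\vecx_0,\vecx_1)$ and $\vece$ with $\matA\vecx = \vecy - \vece \bmod Q$, $\|\vece\| \leq \alpha$, $\|\vecx\| \leq \beta$. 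Set $\vecx' = (\vecx_0 + \vece, \vecx_1)$. Then
\[
\matA\vecx' = (\vecx_0 + \vece) + \matA'\vecx_1 = (\vecx_0 + \matA'\vecx_1) + \vece = (\vecy - \vece) + \vece = \vecy \bmod Q,
\]
and by the triangle inequality $\|\vecx'\| \leq \|\vecx\| + \|\vece\| \leq \alpha + \beta$, so $\vecx'$ has the required norm bound. Again the instance distribution passed to the oracle is unchanged, so the success probability carries over. The one genuine subtlety is the nonzero requirement in $\mathsf{ISIS}$: it is conceivable that $\vecx' = \veczero$, i.e.\ $\vecx_1 = \veczero$ and $\vecx_0 = -\vece$, which forces $\vecy = \veczero$. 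I would handle this by noting that for $\vecy \neq \veczero$ the vector $\vecx'$ is automatically nonzero (since $\matA\vecx' = \vecy \neq \veczero$), and for the measure-zero / easily-recognizable case $\vecy = \veczero$ one can simply output a trivial fixed nonzero short solution or argue it away in the security reduction that invokes this lemma; following~\cite{chen2019approximate} this edge case is benign and does not affect the polynomial-time reduction.

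The only potential obstacle is bookkeeping around the nonzero constraint and making sure the HNF structure $\matA = [\matI_n \mid \matA']$ is actually available in both directions — but since both problems are stated in HNF form this is immediate, and the norm and modular arithmetic are one-line checks. I therefore expect the proof to be short and to consist almost entirely of the two displayed identity computations above.
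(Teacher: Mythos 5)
Your proof is correct and uses exactly the transformation the paper sketches in the paragraph preceding the lemma (``one can transform the approximate preimage $\vecx=(\vecx_0,\vecx_1)$ and the error $\vece$ into an exact preimage $\vecx'=(\vecx_0+\vece,\vecx_1)$''); the paper itself then defers the formal proof to~\cite{chen2019approximate}. Both directions, the norm bookkeeping, and your handling of the nonzero edge case (forwarding the instance unchanged, noting $\vecy\neq\veczero$ forces $\vecx'\neq\veczero$) are all in order.
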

\begin{remark}
	Lemma~\ref{lemma:equivalence} simply takes $(\alpha+\beta)$ as the bound of the size of the exact preimage $\vecx' = (\vecx_0 + \vece, \vecx_1)$.
	When it comes to concrete security estimate, this additive bound is loose and a more accurate approach is to estimate
	$\|\vecx_0 + \vece\|$ and $\|\vecx_1\|$ separately. The term $\|\vecx_0 + \vece\|$ can be estimated based on the Pythagorean additive property when $\vecx_0$ and $\vece$ are Gaussian-like. Moerover, we consider the unbalanced sizes of $\vecx_0 + \vece$ and $\vecx_1$ in later security estimates.
\end{remark}


\section{Compact Gadget for Approximate Trapdoor}\label{sec:gadget}
We present a new gadget for approximate trapdoors in this section.
In contrast with existing gadgets from~\cite{micciancio2012trapdoors,chen2019approximate}, our gadget matrix is of size only \emph{$n$-by-$n$}, which allows more compact public keys and trapdoors.
At the core of our construction is a new type of approximate gadget sampler that we term \emph{semi-random} sampler.
In this sampler, the preimage is randomly sampled, whereas the error is \emph{deterministically fixed} by the target.
While the semi-random sampler loses some randomness of the error part, the distributions of the preimages and the errors can be still simulatable for \emph{uniformly random targets}. This suffices for the need of the application of hash-and-sign signatures.

\subsection{Description of our gadget trapdoor}
This section gives a general description of our gadget trapdoor and the semi-random sampler.
We believe that such a general description can guide further study of new gadget designs.

Let $\matP \in \Z^{n\times n}$ denote the gadget matrix used in our trapdoor construction and $\matQ \in \Z^{n\times n}$ such that
\[\matP\matQ = Q\cdot \matI_n.\]
The public matrix is $\matA\in\mathbb Z_Q^{n\times m}$ with $m > n$ and the approximate trapdoor for $\matA$ is defined as a matrix $\matT\in\mathbb Z^{m\times n}$ such that
\[\matA\matT=\matP\mod Q.\]
Then the approximate trapdoor inversion is transformed to the approximate gadget inversion implemented by our semi-random sampler.
\begin{remark}
	Our trapdoor can be instantiated under different assumptions:
	\begin{itemize}
		\item LWE-based: $\matA = [\matI \mid \bar{\matA} \mid \matP + \bar{\matA}\matS + \matE]$ and $\matT = [-\matE^t \mid -\matS^t \mid \matI]^t$;
		\item NTRU-based: $\matA = [\matI \mid (\matP-\matF)\cdot \matG^{-1}]$ and $\matT = [\matF^t \mid \matG^t]^t$.
	\end{itemize} 
	See Sections~\ref{sec:ntru} and~\ref{sec:lwe} for more details.
\end{remark}

Given a target $\vecu'$, the semi-random gadget sampler outputs a preimage $\vecx'$ such that $\matP\vecx' = \vecu' - \vece \bmod Q$ for some small error $\vece$.
It proceeds in two steps: 
(1) \emph{deterministic error decoding} and (2) \emph{random preimage sampling}. In the first step, the sampler computes the error $\vece$ such that $\vecu' - \vece = \matP\vecc\in \lat(\matP)$.
This can be done by lattice decoding algorithms, e.g. Babai's CVP algorithms~\cite{babai1986lovasz}. We denote by $\algDecode$ the deterministic lattice decoder and use it in a black-box way.
Typically, the output errors are identical for all vectors in a coset $\vect + \lat(\matP)$.
We denote by $E(\matP)$ the set of all possible errors and write $\vece = (\vecu' \bmod \lat(\matP))$ the error for $\vecu'$.
The next step is to sample the preimage $\vecx'$ from  $D_{\lat(\matQ)+\vecc,r}$. Let $\vecx' = \matQ\vecv+\vecc$ for $\vecv \in \Z^n$.
One can verify that
\begin{equation}\label{eq:gadget_correct}
\matP\vecx' = \matP\matQ\vecv + \matP\vecc = \vecu' - \vece \bmod Q.
\end{equation}
A formal description is given in Algorithm~\ref{alg:gadget}.

\begin{algorithm}[htbp]
	\caption{$\algGadget(\vecu', r, \matP, \matQ)$}{\label{alg:gadget}}
	\begin{algorithmic}[1]
		\REQUIRE matrices $\matP, \matQ \in \Z^{n\times n}$ such that $\matP\matQ = Q\cdot \matI_n$ and $r \geq \smootheps(\lat(\matQ))$
		\ENSURE a sample $\vecx' \sim D_{\Z^n,r}$
		conditioned on $\matP\vecx' = \vecu' - \vece \bmod Q$ and  $\vece \in E(\matP)$.
		\STATE $(\vecc,\vece)\leftarrow \algDecode(\vecu', \matP)$ such that $\vecc \in \Z^n$ and $\vecu' - \vece = \matP\vecc$
		\STATE $\vecx' \leftarrow D_{\lat(\matQ)+\vecc,r}$
		\RETURN $\vecx'$
	\end{algorithmic}
\end{algorithm}

The correctness of Algorithm~\ref{alg:gadget} is shown in Lemma~\ref{lem:gadgetcorrect}.
\begin{lemma}\label{lem:gadgetcorrect}
	Algorithm~\ref{alg:gadget} is correct. More precisely,
	let $\matP, \matQ \in \Z^{n\times n}$ such that $\matP\matQ = Q\cdot \matI_n$ and $r\geq \smootheps(\lat(\matQ))$.
	Then the output $\vecx'$ of $\algGadget(\vecu', r, \matP, \matQ)$ follows the distribution of $D_{\Z^n,r}$ conditioned on
	$\matP\vecx' = \vecu' - \vece \bmod Q$ with $\vece \in E(\matP)$.
\end{lemma}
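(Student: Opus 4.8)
The plan is to establish the claimed distribution of the output $\vecx'$ by decomposing the sampling into the two steps of Algorithm~\ref{alg:gadget} and analysing them separately. First I would note that the first step is \emph{deterministic}: the decoder $\algDecode$ on input $\vecu'$ produces a fixed pair $(\vecc,\vece)$ with $\vece = (\vecu' \bmod \lat(\matP)) \in E(\matP)$ and $\vecu' - \vece = \matP\vecc$, so $\vecc$ is a fixed element of $\Z^n$ and $\lat(\matQ) + \vecc$ is a fixed coset of $\lat(\matQ)$. Hence the only randomness is in step~2, where $\vecx'$ is drawn from $D_{\lat(\matQ)+\vecc,r}$.

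Next I would verify the correctness relation and the support. Writing $\vecx' = \matQ\vecv + \vecc$ for some $\vecv \in \Z^n$, the identity $\matP\matQ = Q\cdot\matI_n$ gives $\matP\vecx' = \matP\matQ\vecv + \matP\vecc = Q\vecv + (\vecu' - \vece) \equiv \vecu' - \vece \pmod Q$, which is exactly Equation~\eqref{eq:gadget_correct}; so every possible output satisfies $\matP\vecx' = \vecu' - \vece \bmod Q$ with $\vece \in E(\matP)$. It remains to identify $D_{\lat(\matQ)+\vecc,r}$ with $D_{\Z^n,r}$ conditioned on the event $\matP\vecx' = \vecu' - \vece \bmod Q$. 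For the inclusion of supports, one checks that $\lat(\matQ) \subseteq \Z^n$ (the entries of $\matQ$ are integers) and $\vecc \in \Z^n$, so $\lat(\matQ) + \vecc \subseteq \Z^n$; conversely, if $\vecy \in \Z^n$ satisfies $\matP\vecy \equiv \vecu' - \vece = \matP\vecc \pmod Q$, then $\matP(\vecy - \vecc) \equiv \veczero \pmod Q$, i.e. $\matP(\vecy-\vecc) \in Q\Z^n = \matP\matQ\Z^n$, and since $\matP$ is nonsingular over $\Q$ this yields $\vecy - \vecc \in \matQ\Z^n = \lat(\matQ)$. Thus the set of $\vecy \in \Z^n$ compatible with the conditioning event is precisely $\lat(\matQ)+\vecc$ (the error $\vece$ produced by the deterministic decoder being the unique one in $E(\matP)$ consistent with $\vecu'$, by the assumption that $\algDecode$ is coset-determined). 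Since $D_{\Z^n,r}$ restricted and renormalised to any subset $S \subseteq \Z^n$ is by definition proportional to $\rho_r$ on $S$, and $D_{\lat(\matQ)+\vecc,r}$ is likewise proportional to $\rho_r$ on $\lat(\matQ)+\vecc$, the two distributions coincide, which is the claim.

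I would mention that the smoothing hypothesis $r \geq \smootheps(\lat(\matQ))$ is not needed for this exact-distribution statement, but it is recorded here because it is the condition under which the conditional distribution is well-behaved and, more importantly, it is required later (together with Lemma~\ref{lem:smooth_uniform}) for the simulation/security analysis of the full sampler; alternatively, one may simply note $\rho_r(\lat(\matQ)+\vecc) > 0$ so the conditional distribution is well-defined. The main (and really only) subtlety is the precise bookkeeping of what ``conditioned on $\matP\vecx' = \vecu'-\vece\bmod Q$ with $\vece\in E(\matP)$'' means: one must be careful that for a fixed $\vecu'$ the admissible $\vece$ is the single value output by the deterministic decoder, so that the conditioning event pins down exactly one coset $\lat(\matQ)+\vecc$ rather than a union of cosets; once this is spelled out the rest is the elementary coset computation above.
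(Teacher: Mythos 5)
Your proposal is correct and follows essentially the same route as the paper: both arguments reduce the lemma to the bijection between the conditioning event and the single coset $\lat(\matQ)+\vecc$, and then observe that the Gaussian restricted to that coset is exactly $D_{\lat(\matQ)+\vecc,r}$. Your additional remark that the hypothesis $r\geq\smootheps(\lat(\matQ))$ is not actually used for this exact-distribution claim (only for the later simulation lemma) is a fair and accurate observation, though the paper does not make it.
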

\begin{proof}
	Given $\vecu'$, there exists a unique error $\vece = (\vecu' \bmod \lat(\matP))$ satisfying $\vece \in E(\matP)$ and $\vecu' - \vece = \matP\vecc\in \lat(\matP)$.
	For $\vecx'$ such that $\matP\vecx' = \vecu' - \vece \bmod Q$, let $\matP\vecx' = \vecu' - \vece + Q\vecv$, then
	$\matP\vecx' = \matP(\vecc+\matQ\vecv)$ and thus $\vecx' \in \lat(\matQ)+\vecc$.
	For $\vecx' \in \lat(\matQ)+\vecc$, let $\vecx' - \vecc = \matQ\vecv$ for some $\vecv \in \Z^n$, then $\matP\vecx' = \vecu' - \vece \bmod Q$ as shown by Eq.~\eqref{eq:gadget_correct}.
	Therefore $\matP\vecx' = \vecu' - \vece \bmod Q$ holds if and only if $\vecx' \in \lat(\matQ)+\vecc$. The proof is completed.
	\qed
\end{proof}

We now prove that for uniformly random $\vecu'$,  the preimage and error distributions of $\algGadget(\vecu', r, \matP, \matQ)$ can be simulated.
\begin{lemma}\label{lem:gadgetsimulation}
	Let $\matP, \matQ \in \Z^{n\times n}$ such that $\matP\matQ = Q\cdot \matI_n$ and $r\geq \smootheps(\lat(\matQ))$
	with some negligible $\epsilon > 0$.
	Let $\chi_\vece$ be the distribution of
	$(\vecv \bmod \lat(\matP)) \in E(\matP)$ where $\vecv \leftarrow U(\Z^n_Q)$.
	Then the following two distributions are statistically close.
	\begin{enumerate}\setlength\itemsep{0.1em}
		\item First sample $\vecu' \leftarrow U(\Z^n_Q)$, then sample $\vecx' \leftarrow \algGadget(\vecu', r, \matP, \matQ)$, compute $\vece = (\vecu' \bmod \lat(\matP))$, output $(\vecx', \vecu', \vece)$;
		\item First sample $\vece \leftarrow \chi_\vece$, then sample $\vecx' \leftarrow D_{\Z^n,r}$, set $\vecu' = \vece + \matP\vecx' \bmod Q$, output $(\vecx', \vecu', \vece)$.
	\end{enumerate}
\end{lemma}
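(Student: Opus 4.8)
The plan is to reduce the comparison of the two triple-distributions to a comparison of just the joint law of $(\vecx',\vece)$. First I would note that in both experiments the second output coordinate $\vecu'$ is a \emph{fixed deterministic function} of the other two: in experiment~2 this is literally the definition $\vecu' = \vece + \matP\vecx' \bmod Q$, while in experiment~1 the correctness guarantee of Lemma~\ref{lem:gadgetcorrect} gives $\matP\vecx' = \vecu'-\vece \bmod Q$, hence again $\vecu' = (\vece + \matP\vecx') \bmod Q$ (the reduction landing in the fixed representative set $\Z_Q^n$, which is legitimate since $\matP\matQ = Q\matI_n$ forces $Q\Z^n\subseteq\lat(\matP)$, so $(\cdot \bmod \lat(\matP))$ is well defined on $\Z_Q^n$; recall also that $\algDecode(\vecu',\matP)$ returns the representative $\vece=(\vecu'\bmod\lat(\matP))\in E(\matP)$ together with $\vecc\in\Z^n$ satisfying $\vecu'-\vece=\matP\vecc$). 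Because statistical distance cannot increase under a common post-processing map, it suffices to show that the law of $(\vecx',\vece)$ in experiment~1 is statistically close to that in experiment~2 — which by construction is the product $D_{\Z^n,r}\times\chi_\vece$.

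Second, I would analyze experiment~1 by conditioning on $\vece$. Its $\vece$-marginal is, by the explicit recomputation step, the law of $(\vecu'\bmod\lat(\matP))$ for $\vecu'\leftarrow U(\Z_Q^n)$, i.e. exactly $\chi_\vece$, so the $\vece$-marginals already coincide. Conditioned on $\vece = \vece_0 \in E(\matP)$, the target $\vecu'$ is uniform over the fiber $F_{\vece_0} = \{\vecu'\in\Z_Q^n : (\vecu'\bmod\lat(\matP)) = \vece_0\}$, and the sampler forms $\vecc = \matP^{-1}(\vecu'-\vece_0)\in\Z^n$ (an integer vector since $\vecu'-\vece_0\in\lat(\matP)$) and outputs $\vecx'\leftarrow D_{\lat(\matQ)+\vecc,r}$. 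The key combinatorial claim is that $\vecu'\mapsto (\vecc \bmod \lat(\matQ))$ is a bijection from $F_{\vece_0}$ onto $\Z^n/\lat(\matQ)$: injectivity uses $\matP^{-1}(Q\Z^n) = \matQ\Z^n = \lat(\matQ)$ (from $\matP\matQ = Q\matI_n$), so $\vecc_1-\vecc_2\in\lat(\matQ)$ implies $\matP\vecc_1-\matP\vecc_2\in Q\Z^n$, i.e. the representatives $\vecu'_1,\vecu'_2 \in \Z_Q^n$ coincide; surjectivity then follows by cardinality, both $F_{\vece_0}$ and $\Z^n/\lat(\matQ)$ having size $|\det\matQ|$ (using $|\det\matP|\cdot|\det\matQ| = Q^n$ and $Q\Z^n\subseteq\lat(\matP)$). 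Hence, conditioned on $\vece_0$, the coset $\vecc\bmod\lat(\matQ)$ is uniform, so the conditional law of $\vecx'$ is the uniform mixture $\frac{1}{|\det\matQ|}\sum_{\vecc}D_{\lat(\matQ)+\vecc,r}$ over the cosets of $\lat(\matQ)$ — and in particular it does not depend on $\vece_0$.

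Third, I would show this uniform mixture is statistically close to $D_{\Z^n,r}$. Writing $D_{\Z^n,r} = \sum_{\vecc} p_\vecc\, D_{\lat(\matQ)+\vecc,r}$ with $p_\vecc = \rho_r(\lat(\matQ)+\vecc)/\rho_r(\Z^n)$ over the same cosets, Lemma~\ref{lem:smooth} (applicable since $r\geq\smootheps(\lat(\matQ))$) gives $\rho_r(\lat(\matQ)+\vecc)\in[\tfrac{1-\epsilon}{1+\epsilon},1]\,\rho_r(\lat(\matQ))$ for every $\vecc$, so each $p_\vecc$ lies within a $\tfrac{1\pm\epsilon}{1\mp\epsilon}$ factor of $\tfrac{1}{|\det\matQ|}$; since the cosets have pairwise disjoint supports, the statistical distance between the two mixtures equals $\tfrac12\sum_\vecc\bigl|p_\vecc-\tfrac{1}{|\det\matQ|}\bigr| = O(\epsilon)$. (Equivalently one may invoke Lemma~\ref{lem:smooth_uniform} with $\Lambda=\Z^n$, $\Lambda'=\lat(\matQ)$, which says $D_{\Z^n,r}\bmod\lat(\matQ)$ is within $2\epsilon$ of uniform, together with the fact that $D_{\Z^n,r}$ is obtained by first drawing the coset from $D_{\Z^n,r}\bmod\lat(\matQ)$ and then sampling $D_{\lat(\matQ)+\cdot,r}$.) Combining the pieces: in experiment~1 the conditional law of $\vecx'$ given $\vece$ is within $O(\epsilon)$ of $D_{\Z^n,r}$ irrespective of $\vece$, while the $\vece$-marginal is exactly $\chi_\vece$; hence $(\vecx',\vece)$ is within $O(\epsilon)$ of $D_{\Z^n,r}\times\chi_\vece$, and applying the common deterministic map $(\vecx',\vece)\mapsto(\vecx',\ \vece+\matP\vecx'\bmod Q,\ \vece)$ recovers the two experiments, which are therefore within $O(\epsilon)$ — negligible since $\epsilon$ is.

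I expect the bijection in the second paragraph to be the one delicate point, since it requires keeping the fixed representative set $\Z_Q^n$ cleanly separated from the two quotients $\Z^n/\lat(\matP)$ and $\Z^n/\lat(\matQ)$ and exploiting the two containments $Q\Z^n\subseteq\lat(\matP)$, $Q\Z^n\subseteq\lat(\matQ)$ that come for free from $\matP\matQ=Q\matI_n$; everything else is bookkeeping plus a single application of the smoothing lemma, in the spirit of the analogous arguments of~\cite{micciancio2012trapdoors,chen2019approximate}.
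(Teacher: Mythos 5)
Your proof is correct, and it reaches the conclusion by a genuinely different route from the paper's. The paper argues by direct pointwise density comparison: observing that the two distributions share the support $\{(\vecx',\vecu',\vece):\vecu'=\vece+\matP\vecx'\bmod Q\}$, it computes $P_1=\tfrac{1}{Q^n}\cdot\tfrac{\rho_r(\vecx')}{\rho_r(\lat(\matQ)+\vecc)}$ and $P_2=\tfrac{\det\matQ}{Q^n}\cdot\tfrac{\rho_r(\vecx')}{\rho_r(\Z^n)}$, then applies Lemma~\ref{lem:smooth} once to bound $P_1/P_2\in[\tfrac{1-\epsilon}{1+\epsilon},\tfrac{1+\epsilon}{1-\epsilon}]$. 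You instead reduce to the $(\vecx',\vece)$-marginal via a data-processing step, note that the $\vece$-marginals agree exactly, and establish a cardinality bijection between the fiber $F_{\vece_0}\subset\Z_Q^n$ and the quotient $\Z^n/\lat(\matQ)$, which shows that conditioned on $\vece$ the coset of $\vecc$ is uniform and hence the conditional law of $\vecx'$ is the uniform mixture $\tfrac{1}{|\det\matQ|}\sum_{\vecc}D_{\lat(\matQ)+\vecc,r}$; a single application of Lemma~\ref{lem:smooth} (or equivalently Lemma~\ref{lem:smooth_uniform}) then identifies this with $D_{\Z^n,r}$ up to $O(\epsilon)$. Your decomposition is longer but more structural: it pinpoints that the \emph{only} approximate step is flattening the coset distribution of $D_{\Z^n,r}$, makes the algebraic role of $\matP\matQ=Q\matI_n$ explicit through the containments $Q\Z^n\subseteq\lat(\matP)$ and $Q\Z^n\subseteq\lat(\matQ)$, and shows the $\vece$- and $\vecu'$-coordinates carry no additional error. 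The paper's version is considerably shorter, absorbing your bijection and conditional-probability bookkeeping into the single normalization $\tfrac{1}{\det\matP}=\tfrac{\det\matQ}{Q^n}$ appearing in $P_2$.
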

\begin{proof}
	The supports of two distributions are identical as follows:
	$$\set{(\vecx', \vecu', \vece) \in \Z^n \times \Z^n_Q \times E(\matP) \mid \vecu' = \vece + \matP\vecx' \bmod Q}.$$
	Distribution 1 outputs $(\vecx', \vecu', \vece)$ with probability
	\[P_1[(\vecx', \vecu', \vece)] = \frac{1}{Q^n}P_1[\vecx' | \vecu'] = \frac{1}{Q^n}\cdot \frac{\rho_r(\vecx')}{\rho_r(\lat(\matQ)+\vecc)} \]
	and Distribution 2 with
	\[
	P_2[(\vecx', \vecu', \vece)] = \frac{1}{\det(\matP)}\cdot\frac{\rho_r(\vecx')}{\rho_r(\Z^n)} = \frac{\det(\matQ)}{Q^n}\cdot\frac{\rho_r(\vecx')}{\rho_r(\Z^n)}.
	\]
	Since $r\geq \smootheps(\lat(\matQ))$ and
	$\rho_r(\Z^n) = \sum_{\vecc \in \mathcal{P}(\matQ) \cap \Z^n} \rho_r(\vecc + \lat(\matQ))$,
	Lemma~\ref{lem:smooth} shows
	\[\rho_r(\lat(\matQ)+\vecc) \in  \left[\frac{1-\epsilon}{1+\epsilon}, \frac{1+\epsilon}{1-\epsilon}\right]\cdot\frac{\rho_r(\Z^n)}{\det(\matQ)}.\]
	Hence $P_1[(\vecx', \vecu', \vece)] \in \left[\frac{1-\epsilon}{1+\epsilon}, \frac{1+\epsilon}{1-\epsilon}\right]\cdot P_2[(\vecx', \vecu', \vece)]$ and we complete the proof.
	\qed
\end{proof}

Algorithm~\ref{alg:presamp} illustrates the approximate trapdoor inversion algorithm by using our gadget.
The output preimage $\vecx$ satisfies that
$$\matA\vecx = \matP\vecx' + \matA\vecp = \vecu' - \vece + \matA\vecp = \vecu - \vece \bmod Q.$$
Therefore the approximation error $\vece$ in $\algPreSamp(\matA,\matT,\vecu,r,s)$ is exactly the one in $\algGadget(\vecu', r, \matP, \matQ)$: for uniformly random $\vecu$, the error $\vece$ follows the distribution $\chi_\vece$ defined in Lemma~\ref{lem:gadgetsimulation}.

\begin{algorithm}[H]
	\caption{$\algPreSamp(\matA,\matT,\vecu,r,s)$}{\label{alg:presamp}}
	\begin{algorithmic}[1]
		\REQUIRE $(\matA,\matT) \in \Z_Q^{n\times m} \times \Z^{m \times n}$ such that $\matA\matT = \matP \bmod Q$, a vector $\vecu \in \Z_Q^n$, \\
		$r \geq \smootheps(\lat(\matQ))$ and $s^2\matI_m\succ r^2\matT\matT^t$
		\ENSURE an approximate preimage $\vecx$ of $\vecu$ for $\matA$.
		\STATE $\vecp \gets D_{\Z^m, \sqrt{\Sigma_p}}$ where $\Sigma_p = s^2\matI_m - r^2\matT\matT^t$
		\STATE $\vecu' = \vecu - \matA\vecp \bmod Q$
		\STATE $\vecx' \gets \algGadget(\vecu', r, \matP, \matQ)$
		\RETURN $\vecx = \vecp + \matT\vecx'$
	\end{algorithmic}
\end{algorithm}

Let $\mathbf L=[\mathbf I_m\mid\mathbf T]$. The next lemma characterizes the distribution of the linear transformation on the concatenation of $\mathbf p\leftarrow D_{\mathbb Z^m,\sqrt{\Sigma_{p}}}$ and $\vecx'\leftarrow D_{\mathbb Z^n,r}$, which represents the convolution step, i.e.,
$$
\mathbf x=\mathbf p+\mathbf T\vecx'=\mathbf{L}\cdot(\mathbf p,\vecx').
$$

\begin{lemma}\label{lem:lineartrans}
  Let $r\geq \smootheps(\Z^n)$. The distribution $\mathbf L\cdot D_{\mathbb Z^{m+n},\sqrt{\Sigma_p\oplus r^2\mathbf I_n}}$ is statistically close to $D_{\mathbb Z^m,s}$, if $s^2\geq\left(r^2+\smootheps(\mathbb Z^n)^2\right)\cdot \left(s_1(\matT)^2+1\right)$. 
\end{lemma}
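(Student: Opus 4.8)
The plan is to obtain the lemma as an instance of the linear-transformation theorem, Theorem~\ref{thm:lineartransform}, applied to the lattice $\Z^{m+n}$ with non-spherical width $\matS:=\sqrt{\Sigma_p\oplus r^2\matI_n}$ and the map $\matL=[\matI_m\mid\matT]$. Before invoking it, I would record three facts. (i) The parameter condition gives $s^2\geq(r^2+\smootheps(\Z^n)^2)(s_1(\matT)^2+1)>r^2\,s_1(\matT)^2$, so $\Sigma_p=s^2\matI_m-r^2\matT\matT^t\succ0$ and $\matS$ is a genuine (square, invertible) square root; consequently $D_{\Z^{m+n},\matS}$ is exactly the joint law of $\vecp\gets D_{\Z^m,\sqrt{\Sigma_p}}$ and $\vecx'\gets D_{\Z^n,r}$, since the covariance is block diagonal and $\Z^{m+n}=\Z^m\times\Z^n$. (ii) $\matL\Z^{m+n}=\Z^m$. (iii) $(\matL\matS)(\matL\matS)^t=\matL(\Sigma_p\oplus r^2\matI_n)\matL^t=\Sigma_p+r^2\matT\matT^t=s^2\matI_m$, so the output $D_{\matL\Z^{m+n},\matL\matS}$ predicted by the theorem is exactly $D_{\Z^m,s}$. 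Thus, once the theorem's hypotheses are met, it yields $\matL\cdot D_{\Z^{m+n},\matS}\approx_{\bar\epsilon}D_{\Z^m,s}$ with $\bar\epsilon=2\epsilon/(1-\epsilon)$, which is the claim.

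It then remains to verify the two hypotheses for $\Lambda=\Z^{m+n}$ and $\matL$. For the kernel condition, $\ker(\matL)=\{(-\matT\vecz,\vecz):\vecz\in\R^n\}=\spn(\matB)$ with $\matB:=\left[\begin{smallmatrix}-\matT\\\matI_n\end{smallmatrix}\right]$, which is a $\Z^{m+n}$-subspace since $\matT$ is integral, and $\Z^{m+n}\cap\ker(\matL)=\lat(\matB)$. The remaining hypothesis is the smoothing bound $\smootheps(\lat(\matB))\leq\matS$, i.e. $\rho_{\sqrt{\Sigma^{-1}}}(\lat(\matB)^*)\leq1+\epsilon$ with $\Sigma:=\Sigma_p\oplus r^2\matI_n$, and this is the part I would spend the most effort on.

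The approach for the smoothing bound is an eigenvalue reduction to the scalar bound for $\Z^n$. Writing a dual vector as $\vecw=\matB(\matB^t\matB)^{-1}\vecz$ with $\vecz\in\Z^n$, one gets $\vecw^t\Sigma\vecw=\vecz^t(\matB^t\matB)^{-1}\bigl(\matB^t\Sigma\matB\bigr)(\matB^t\matB)^{-1}\vecz$; with $\matW:=\matT^t\matT$ and $\Sigma_p=s^2\matI_m-r^2\matT\matT^t$ this gives $\matB^t\matB=\matI_n+\matW$ and $\matB^t\Sigma\matB=\matT^t\Sigma_p\matT+r^2\matI_n=s^2\matW-r^2\matW^2+r^2\matI_n$, which commutes with $\matB^t\matB$, so the quadratic form $\vecz\mapsto\vecw^t\Sigma\vecw$ has eigenvalues $g(\lambda):=\frac{-r^2\lambda^2+s^2\lambda+r^2}{(\lambda+1)^2}$ over the eigenvalues $\lambda\in[0,s_1(\matT)^2]$ of $\matW$. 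If $g(\lambda)\geq\smootheps(\Z^n)^2$ on that interval, then $\vecw^t\Sigma\vecw\geq\smootheps(\Z^n)^2\|\vecz\|^2$, hence $\rho_{\sqrt{\Sigma^{-1}}}(\lat(\matB)^*)\leq\rho\bigl(\smootheps(\Z^n)\cdot\Z^n\bigr)\leq1+\epsilon$ by definition of $\smootheps(\Z^n)$. Now $g(0)=r^2\geq\smootheps(\Z^n)^2$ by the standing hypothesis $r\geq\smootheps(\Z^n)$; and for $\lambda>0$ the inequality $g(\lambda)\geq\smootheps(\Z^n)^2$ rearranges to $s^2\geq\phi(\lambda):=(r^2+\smootheps(\Z^n)^2)\lambda+2\smootheps(\Z^n)^2+\frac{\smootheps(\Z^n)^2-r^2}{\lambda}$. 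Since $r\geq\smootheps(\Z^n)$, $\phi$ is increasing on $(0,\infty)$, so its supremum over $(0,s_1(\matT)^2]$ is $\phi(s_1(\matT)^2)$, and a short computation gives $\phi(s_1(\matT)^2)\leq(r^2+\smootheps(\Z^n)^2)(s_1(\matT)^2+1)$, the difference being $(r^2-\smootheps(\Z^n)^2)(1+s_1(\matT)^{-2})\geq0$. Hence the parameter condition forces $g\geq\smootheps(\Z^n)^2$ throughout, which finishes the verification and the proof.

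The main obstacle is exactly this last step. A crude bound — using $\smootheps(\lat(\matB))\leq s_1(\matB)\,\smootheps(\Z^n)=\sqrt{s_1(\matT)^2+1}\,\smootheps(\Z^n)$ and then $\Sigma\succeq(s_1(\matT)^2+1)\smootheps(\Z^n)^2\matI$ — would demand the much stronger $r^2\geq(s_1(\matT)^2+1)\smootheps(\Z^n)^2$, which is not available. One has to exploit that the ``deficit'' $r^2\matT\matT^t$ inside $\Sigma_p$ sits precisely in the directions where $\matB^t\matB$ is large, and extracting the clean inequality $\phi(s_1(\matT)^2)\leq(r^2+\smootheps(\Z^n)^2)(s_1(\matT)^2+1)$ from the rational function $g$ is what pins down the exact shape of the parameter condition.
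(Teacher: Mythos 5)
Your proof is correct and follows essentially the same route as the paper: apply Theorem~\ref{thm:lineartransform}, identify $\Z^{m+n}\cap\ker(\matL)$ with the lattice generated by $\matB=\left[\begin{smallmatrix}-\matT\\\matI_n\end{smallmatrix}\right]$, pass to the dual basis, and reduce the smoothing condition $\matS\geq\smootheps(\Lambda_{\matL})$ to the scalar inequality $g(\lambda)\geq\smootheps(\Z^n)^2$ over the eigenvalues $\lambda$ of $\matT^t\matT$. The only difference is that the paper declares the final step ``routine computation,'' whereas you make it explicit via the monotonicity of $\phi$ on $(0,\infty)$ and the slack $(r^2-\smootheps(\Z^n)^2)(1+s_1(\matT)^{-2})\geq0$, which cleanly isolates where $r\geq\smootheps(\Z^n)$ is used.
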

\begin{proof}


  Let $\Lambda_{\mathbf L}=\mathbb Z^{m+n}\cap\textrm{ker}(\mathbf L)$ that is an integer lattice.
  By Theorem~\ref{thm:lineartransform}, it suffices to show $\sqrt{\Sigma_p\oplus r^2\mathbf I_n}\geq\eta_{\epsilon}\big(\Lambda_{\mathbf L}\big)$.
  Let $\matB=\left[
  \begin{smallmatrix}
  	\matT \\
  	-\matI_n
  \end{smallmatrix}
  \right]$, then $\matB$ is a basis of $\Lambda_{\mathbf L}$.
The dual basis of $\matB$ is
$$
\mathbf B^*=\mathbf B(\mathbf{B}^t\mathbf B)^{-1}=\left[
  \begin{array}{c}
    \mathbf T \\
    \mathbf{-I}_n
  \end{array}
\right]\left(\mathbf T^t\mathbf T+\mathbf I_n\right)^{-1}.
$$
According to the definition of smoothing parameter, we need to show
$$
\sqrt{\Sigma_p\oplus r^2\mathbf I_n}\geq\eta_{\epsilon}\big(\lat(\mathbf B)\big)
$$
i.e.,
\begin{equation*}
  (\mathbf B^*)^t(\Sigma_p\oplus r^2\cdot\mathbf I_n)\mathbf B^*\succ\eta^2_{\epsilon}(\mathbb Z^n).
\end{equation*}
This reduces to showing
\begin{equation*}
  \left(\mathbf T^t\mathbf T+\mathbf I_n\right)^{-t}\cdot\left(s^2\matT^t\matT-r^2(\matT^t\matT)^2+r^2\matI_n\right)\cdot\left(\mathbf T^t\mathbf T+\mathbf I_n\right)^{-1}\succ\eta^2_{\epsilon}(\mathbb Z^n).
\end{equation*}
Let $\mathbf T^t\mathbf T=\mathbf{UVU}^{-1}$ be the eigenvalue decomposition where $\matV=\mathrm{diag}(\lambda_1,\ldots,\lambda_n)$ with $\lambda_i$ being the eigenvalues. The left-hand side can be rewritten as
$$
\mathbf U(\mathbf V+\mathbf I_n)^{-t}\left(s^2\mathbf V- r^2\mathbf V^2+r^2\matI_n\right)(\mathbf V+\mathbf I_n)^{-1}\mathbf U^{-1},
$$
and we need to prove
$$
\frac{s^2\lambda_i- r^2\lambda_i^2+r^2}{(\lambda_i+1)^2}\geq\eta^2_{\epsilon}(\mathbb Z^n),
$$
i.e.
$$
s^2\geq(r^2+\smootheps(\mathbb Z^n)^2)\cdot \lambda_i+2\cdot\smootheps(\mathbb Z^n)^2+\frac{\smootheps(\mathbb Z^n)^2-r^2}{\lambda_i}.
$$
By some rountine computation, one can check that this condition is satisfied
when $r \geq \smootheps(\mathbb Z^n)$ and $\left(r^2+\smootheps(\mathbb Z^n)^2\right)\cdot \left(s_1(\matT)^2+1\right)$.
\qed
\end{proof}

We now prove that the preimage and error distributions are simulatable without knowing the trapdoor.
This property is needed in most trapdoor based use cases.
Our argument only holds for uniformly random $\vecu$ as in~\cite{chen2019approximate}.

\begin{theorem}\label{thm:maintheorem}
	Let $\matP, \matQ \in \Z^{n\times n}$ such that $\matP\matQ = Q\cdot \matI_n$. 
	Let $(\matA, \matT)$ be a matrix-approximate trapdoor pair,   
  $(r, s)$ satisfying 
  $s^2\geq\left(r^2+\smootheps(\mathbb Z^n)^2\right)\cdot \left(s_1(\matT)^2+1\right)$ and 
  $r\geq \smootheps(\lat(\matQ))$. 
	Then the following two distributions are statistically indistinguishable:
	\begin{equation*}
		\left\{
		(\matA, \vecx, \vecu, \vece):~\vecu \gets U(\Z_Q^n),~\vecx\gets \algPreSamp(\matA,\matT,\vecu,r,s),~\vece = \vecu-\matA\vecx \bmod Q
		\right\}
	\end{equation*}
	\begin{equation*}
		\left\{
		(\matA, \vecx, \vecu, \vece):~\vecx \gets D_{\Z^m,s},~\vece \gets \chi_{\vece},~\vecu = \matA\vecx + \vece \bmod Q
		\right\}.
	\end{equation*}
\end{theorem}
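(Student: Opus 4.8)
The plan is to pass from the first distribution to the second through two hybrids, each justified by one of the simulation lemmas already proved, together with the elementary facts that (i) applying a fixed deterministic map to two statistically close distributions leaves them statistically close, and (ii) replacing a block of a joint distribution by a statistically close block changes the joint by the same amount whenever that block is independent of the rest. Throughout, $\matA$ is the fixed public component and is simply carried along in both distributions, so it may be conditioned on.

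\textbf{Unfolding the real experiment and applying the gadget lemma.} First I would rewrite the real experiment in the order used by $\algPreSamp$: draw $\vecp \gets D_{\Z^m,\sqrt{\Sigma_p}}$, and observe that since $\vecu \gets U(\Z_Q^n)$ is independent of $\vecp$, the vector $\vecu' = \vecu - \matA\vecp \bmod Q$ is uniform over $\Z_Q^n$ and independent of $\vecp$, while $\vecu = \vecu' + \matA\vecp \bmod Q$ is a deterministic function of $(\vecp,\vecu')$. Running $\vecx' \gets \algGadget(\vecu',r,\matP,\matQ)$ with the error it produces called $\vece$, the identity $\matA\matT = \matP \bmod Q$ yields $\vecx = \vecp + \matT\vecx'$ and $\vecu - \matA\vecx \equiv \vecu' - \matP\vecx' \equiv \vece \pmod Q$ (the small error from $\algGadget$ being the designated representative modulo $Q$). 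The point I would stress is that $(\vecx',\vecu',\vece)$ is a randomized function of $\vecu'$ alone, hence independent of $\vecp$. So Lemma~\ref{lem:gadgetsimulation} lets me replace $(\vecx',\vecu',\vece)$ by its simulated counterpart ($\vece \gets \chi_\vece$, then $\vecx' \gets D_{\Z^n,r}$, then $\vecu' = \vece + \matP\vecx' \bmod Q$) at statistical cost $O(\epsilon)$, even inside the joint distribution with $\vecp$; pushing this through the deterministic map $(\vecp,\vecx',\vecu',\vece) \mapsto (\matA,\ \vecp+\matT\vecx',\ \vecu'+\matA\vecp,\ \vece)$ produces a hybrid in which $\vecp \gets D_{\Z^m,\sqrt{\Sigma_p}}$ and $\vecx' \gets D_{\Z^n,r}$ are independent, $\vece \gets \chi_\vece$ is independent of both, $\vecx = \vecp + \matT\vecx'$, and $\vecu = \vece + \matP\vecx' + \matA\vecp = \matA\vecx + \vece \bmod Q$.

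\textbf{The convolution step and conclusion.} In that hybrid, $(\vecp,\vecx')$ is exactly $D_{\Z^{m+n},\sqrt{\Sigma_p \oplus r^2\matI_n}}$, so $\vecx = \matL\cdot(\vecp,\vecx')$ with $\matL = [\matI_m \mid \matT]$, and Lemma~\ref{lem:lineartrans} gives $\vecx \approx D_{\Z^m,s}$ up to statistical distance $O(\epsilon)$. Since $\vece$ is independent of $(\vecp,\vecx')$, the pair $(\vecx,\vece)$ is statistically close to $(D_{\Z^m,s},\chi_\vece)$ with independent coordinates; applying the deterministic map $(\matA,\vecx,\vece) \mapsto (\matA,\vecx,\matA\vecx+\vece \bmod Q,\vece)$ then produces precisely the second distribution. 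Summing the two $O(\epsilon)$ bounds and recalling that $\epsilon$ is negligible completes the argument.

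\textbf{Parameter bookkeeping and the main obstacle.} Before writing this up I would check the lemma hypotheses line up: since $\lat(\matQ) \subseteq \Z^n$ and the smoothing parameter is monotone under passing to a sublattice, $r \geq \smootheps(\lat(\matQ))$ already implies $r \geq \smootheps(\Z^n)$, which is the extra requirement of Lemma~\ref{lem:lineartrans}; and $s^2 \geq (r^2 + \smootheps(\Z^n)^2)(s_1(\matT)^2+1)$ in particular forces $s^2\matI_m \succ r^2\matT\matT^t$, so $\Sigma_p$ in $\algPreSamp$ is well defined and positive definite. I do not expect any step to be deep; the one place needing genuine care is the independence claim used to transplant Lemma~\ref{lem:gadgetsimulation} into the full experiment — one must verify that in the real run $\vecp$ and $(\vecx',\vecu',\vece)$ really are independent, and that $\vece$ never becomes correlated with $(\vecp,\vecx')$ in the second hybrid. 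Everything else is the triangle inequality and data-processing.
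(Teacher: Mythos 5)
Your proposal is correct and follows essentially the same hybrid argument as the paper: reparametrize by sampling $\vecu'$ uniformly and deriving $\vecu$, apply Lemma~\ref{lem:gadgetsimulation} to swap the gadget sampler for its simulator, then apply Lemma~\ref{lem:lineartrans} to collapse the convolution $\vecp+\matT\vecx'$ into $D_{\Z^m,s}$. Your added remarks on the independence of $\vecp$ from $(\vecx',\vecu',\vece)$, the implication $r\geq\smootheps(\lat(\matQ))\Rightarrow r\geq\smootheps(\Z^n)$ via sublattice monotonicity, and positive-definiteness of $\Sigma_p$ are all valid bookkeeping that the paper leaves implicit.
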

\begin{proof}
	Let  
	\begin{itemize}
		\item $\vecp \gets D_{\Z^m, \sqrt{\Sigma_p}}$ be a perturbation, 
		\item $\vecu\in \Z_Q^n$ be the target of $\algPreSamp(\matA,\matT,\vecu,r,s)$,
		\item $\vecu' = \vecu - \matA\vecp \bmod Q$ be the target of $\algGadget(\vecu', r, \matP, \matQ)$,
		\item $\chi_\vece$ be the distribution of
		$(\vecv \bmod \lat(\matP)) \in E(\matP)$ where $\vecv \leftarrow U(\Z^n_Q)$.
	\end{itemize}
	\noindent\textbf{Real distribution}: The real distribution of $(\matA, \vecx, \vecu, \vece)$ is
	{
	\begin{equation}
	\begin{aligned}
	&\matA, \vecu \gets U(\Z_Q^n), \vecp \gets D_{\Z^m, \sqrt{\Sigma_p}}, \vecu' = \vecu - \matA\vecp,
	\\
	&\vecx' \gets \algGadget(\vecu', r, \matP, \matQ), \vecx = \vecp + \matT\vecx', \vece = \vecu-\matA\vecx \bmod Q.	
	\end{aligned}
	\end{equation}
	}
	
	\noindent\textbf{Hybrid 1}: Instead of sampling $\vecu \gets U(\Z_Q^n)$, we sample $\vecu' \gets U(\Z_Q^n)$ and $\vecp \gets D_{\Z^m, \sqrt{\Sigma_p}}$, then compute
	$\vecu = \vecu' + \matA\vecp$. We keep $(\vecx', \vecx,\vece)$ unchanged. Clearly, the real distribution and Hybrid 1 are the same.

	\vspace{0.75em}
	\noindent\textbf{Hybrid 2}: Instead of sampling $\vecu', \vecx'$ and computing $\vece$ as in Hybrid 1, we sample $\vecx' \gets D_{\Z^{n},r}$ and $\vece\leftarrow\chi_{\vece}$, then compute $\vecu' = \matP\vecx' + \vece$.
	All other terms $(\vecp, \vecx, \vecu)$ remain unchanged. By Lemma~\ref{lem:gadgetsimulation}, Hybrid 1 and Hybrid 2 are statistically close.
	
	\vspace{0.75em}
	\noindent\textbf{Hybrid 3}: Instead of sampling $\vecp, \vecx'$ and computing $\vecx = \vecp + \matT\vecx'$ in Hybrid 2, we sample directly $\vecx \gets D_{\Z^m,s}$ and compute $\vecu = \matA\vecx + \vece \bmod Q$, where $\vece\gets\chi_{\vece}$ is as before. Note that in Hybrid 2,
$$
\vecu = \vecu' + \matA\vecp = \vece + \matP\vecx' + \matA\vecp = \vece + \matA(\matT\vecx' + \vecp) = \matA\vecx + \vece \bmod Q
$$
	and $\vecx = \vecp + \matT\vecx'$ follows the distribution $[\matI_m \mid \matT]\cdot D_{\Z^{m+n}, \sqrt{\Sigma_p \oplus r^2\matI_n}}$.
	By Lemma~\ref{lem:lineartrans}, Hybrid 3 and Hybrid 2 are statistically close. Now we complete the proof.  \qed
\end{proof}

\subsection{Simple instantiation and comparisons}\label{subsec:gadget_instance}
Our new gadget trapdoor has a very simple instantiation by using
\[(\matP,\matQ) = (p\matI_n,q\matI_n)\]
where $p, q \in \Z$ such that $Q = pq$. 
In this case, $\algDecode$ is implemented by coefficient-wise $\bmod~p$ operations and 
$E(\matP) = \Z_p^n$, $\chi_\vece = U(\Z_p^n)$.
Hence for uniformly random targets, the standard deviation of error coefficients is $\sqrt{\frac{p^2-1}{12}}$.
Since $\smootheps(\lat(\matQ)) = q\cdot \smootheps(\Z^n)$, the preimage size is about $\sqrt{n}\cdot q\cdot \smootheps(\Z^n)$.
Table~\ref{tab:gadget_comparison} shows the comparisons between previous gadgets and ours.

\begin{table}[H]\renewcommand\arraystretch{1.2}
	\centering
	\caption{Comparisons with the gadgets from~\cite{micciancio2012trapdoors} and~\cite{chen2019approximate}.
	Here $m'$ is the column number of the gadget matrix, $\vecx' \in \Z^{m'}$ is the preimage, $\vece\in \Z^n$ is the error and $\eta = \smootheps(\Z)$. }\label{tab:gadget_comparison}
	\setlength{\tabcolsep}{1.6mm}{}
	\begin{tabular}{p{1.5cm}<{\centering} p{2.8cm}<{\centering} p{1.4cm}<{\centering} p{1cm}<{\centering} p{2.2cm}<{\centering} p{1.6cm}<{\centering}}
		\toprule
		& Gadget & $Q$ & $m'$  & $\|\vecx'\|/\sqrt{m'}$ & $\|\vece\|/\sqrt{n}$ \\
		\hline
		\vspace{-.2cm}\cite{micciancio2012trapdoors}  & $\matI_n \otimes \vecg^t$, $\vecg = (1, b,\cdots,b^{k-1})$ & \vspace{-.2cm}$(b^{k-1},b^k]$ & \vspace{-.2cm}$nk$ & \vspace{-.2cm} $\approx \sqrt{(b^2+1)}\eta$ & \vspace{-.2cm} $0$ \\
		\vspace{-.2cm}\cite{chen2019approximate}  & $\matI_n \otimes \vecf^t$, $\vecf = (b^l, \cdots,b^{k-1})$ & \vspace{-.2cm}$(b^{k-1},b^k]$ & \vspace{-.2cm}$n(k-l)$ & $\approx \sqrt{(b^2+1)}\eta$  & $\approx b^l\eta$ \\
		This work  & $p\cdot\matI_n$ & $pq$ & $n$ & $\approx q\eta$ & $\approx\sqrt{\frac{p^2-1}{12}}$ \\
		\bottomrule
	\end{tabular}
\end{table}

The above instantiation of our approximate gadget offers significant advantages in terms of compactness, efficiency and parameter selection:

\paragraph{Compactness.}
Our gadget vector consists of only one entry, i.e. $p$,
while to the best of our knowledge, the gadget  from~\cite{chen2019approximate} requires at least three entries in practical applications.
As a direct consequence, the hash-and-sign signatures based on our gadget have much shorter key and signatures.
The reduced trapdoor size also results in a smaller Gaussian width and thus supports a smaller modulus.

\paragraph{Efficiency.}
Due to the semi-random sampler, the error in this simple instantiation is deterministically generated by modulo, which is highly efficient in terms of speed and randomness.
The preimage sampling boils down to only $n$ times sampling of $D_{q\Z+c,r}$.
By contrast, although \cite{chen2019approximate} proposed to replace the gadget $\vecg=(1,b,\cdots,b^k)$ with a truncated version $(b^{l},\cdots,b^{k})$, the gadget sampling is still performed over the gadget lattice defined by $\vecg$ and thus requires $k$ times integer sampling.
In addition, our gadget allows smaller trapdoors, which also reduces the cost of perturbation sampling.

\paragraph{Parameter selection.}
As mentioned before, the modulus associated with our gadget is $Q = pq$ and
the preimage and error sizes are linear in $q$ and $p$ respectively.
This is convenient for flexible and tight parameter choices.
However, for the gadget in \cite{chen2019approximate}, its error size is roughly proportional to $b^{l-1}$
and such an exponential growth heavily limits optimal parameter selection.

\vspace{.25cm}

In the rest of the paper, we will use the above simple gadget instantiation to build hash-and-sign signatures.
Nevertheless, the design space can be further expanded by taking some lattices with efficient decoding or sampling\footnote{Such remarkable lattices are listed in~\cite{ducas2022lattice}.} into account. 
More practical instantiations definitely need much efforts.
We leave this to future works.

\section{Efficient Hash-and-Sign Signatures over NTRU Lattices}\label{sec:ntru}
This section presents an NTRU-based hash-and-sign signature scheme, named \Robin, that is instantiated with the compact gadget in Section~\ref{sec:gadget}.
\Robin{} achieves good performance comparable to Falcon~\cite{falcon} and its variant Mitaka~\cite{espitau2022simpler}.
It also offers significant advantages from an implementation standpoint.
Its signing procedure is considerably simpler and easier to implement without floating point arithmetic.
Its secret key is one short vector instead of one short basis as in Falcon and Mitaka, which dramatically simplifies and accelerates the trapdoor generation.
\Robin{} can therefore be an attractive choice particularly in constrained environments.

\subsection{Description of the \Robin{} signature scheme}\label{subsec:ntru_alg}
\subsubsection{Parameters.}
The underlying NTRU is parameterized by the ring $\cR = \cR_n^- = \Z[x] / (x^n-1)$ and the modulus $Q$.
Let
\[\mathcal{T}(n, a, b) = \left\{ v \in \cR \Big|
\begin{aligned}
	& a\  \text{coefficients equal to }  1;\\
	v \text{ has exactly } & b\  \text{coefficients equal to } -1; \\
	&n-a-b \ \text{coefficients equal to } 0.
\end{aligned}\right\}.\]
The secret key $(f,g)$ are uniformly sampled from $\mathcal{T}(n, a, b)$.
The gadget matrix is $\matP = p\matI_n$ and the associated $\matQ = q\matI_n$ such that $pq = Q$.
Let $\alpha$ be the parameter controlling the quality of the trapdoor such that
$$\sqrt{s_1\left(\mathcal{M}(f\bar f + g\bar g)\right)} \leq \alpha \|(f,g)\| = \alpha \sqrt{2(a+b)}.$$
Let $\bar r = \smootheps(\Z^n)$ and $r \geq q\bar{r}$ be the width for the approximate gadget sampler.
Let $s \geq \frac{\sqrt{1+q^2}}{q}r\alpha \sqrt{2(a+b)}$ be the width for approximate preimages.
Let $\beta$ be the acceptance bound of $\|(z_0+e, \gamma z_1)\|$ where $(z_0,z_1)$ is the approximate preimage, $e$ is the approximate error and $\gamma = \frac{\sqrt{s^2 + (p^2-1)/12}}{s}$ such that $\|z_0+e\|\approx \gamma \|z_1\|$.

\subsubsection{Key generation.} The key generation of \Robin{} is very different from that of other NTRU-based hash-and-sign signatures Falcon and Mitaka.
Instead, it is similar to that of BLISS~\cite{ducas2013lattice} which is an NTRU-based Fiat-Shamir signature scheme.
More concretely, \Robin{} uses an inhomogeneous NTRU key pair in which
the secret key is composed of two short polynomials $(f, g)$ and the public key is $h = (p - g) / f \bmod Q$, then $hf+g = p \bmod Q$.
In addition, we partially apply the techniques suggested in~\cite{espitau2022simpler} to get a high-quality trapdoor in a short time.
The whole key generation is formally described in Algorithm~\ref{alg:ntru_keygen}.

\begin{algorithm}[htb]
	\caption{$\algNtruKG$}{\label{alg:ntru_keygen}}
	\begin{algorithmic}[1]
		\REQUIRE
		the ring $\cR = \cR_n^-$ with $n$ a prime, $Q = pq$, $(a, b) \in\Z^2$ and $\alpha > 0$.
		\ENSURE
		public key $h \in \cR / (Q\cdot\cR)$,
		secret key $(f,g) \in \cR^2$
		\STATE $f_1,\cdots, f_{K}, g_1,\cdots,g_{K} \overset{\$}{\leftarrow} \mathcal{T}(n, a,b)$ with $K=5$
		\FOR{$i = 1$ to $K$}
		\FOR{$j = 1$ to $K$}
		\STATE find $k \in \Z_n^*$ minimizing $s_1\left( \mathcal{M}\left(f_i\bar f_i + \sigma_k(g_j)\overline{\sigma_k(g_j)}\right) \right)$
		\STATE $(f,g) \gets (f_i,\sigma_k(g_j))$
		\IF {$\sqrt{s_1\left(\mathcal{M}(f\bar f + g\bar g)\right)} \leq \alpha \sqrt{2(a+b)}$}
		\STATE $h \leftarrow (p - g) / f \bmod Q$
		\STATE return $(h, (f,g))$
		\ENDIF
		\ENDFOR
		\ENDFOR
		\STATE restart
	\end{algorithmic}
\end{algorithm}

\subsubsection{Signing procedure.} Algorithm~\ref{alg:ntru_sign} shows the signing procedure that is in essence the approximate preimage sampling (Algorithm~\ref{alg:presamp}).
Given the hashed message $u$, Algorithm~\ref{alg:ntru_sign} samples a preimage $(z_0,z_1)$ such that $z_0 + hz_1 = u - e \bmod Q$ for small $e$.
Only $z_1$ is used as the actual signature, as the short term $(z_0+e) = u - hz_1 \bmod Q$ can be recovered during verification. 
We set the acceptance bound $\beta=1.04\cdot \Exp[\|(z_0+e,\gamma z_1)\|]$. 
We experimentally verified that the restart happens with probability $\approx 1\%$ for this setting. 

\begin{algorithm}[htb]
	\caption{$\algNtruSign$}{\label{alg:ntru_sign}}
	\begin{algorithmic}[1]
		\REQUIRE
		a message $\msg$, the NTRU key pair $(h, (f,g))$, $r \geq q\smootheps(\Z^n)$, \\
		$s \geq r\alpha \sqrt{2(a+b)}$, $\gamma = \frac{\sqrt{s^2 + (p^2-1)/12}}{s}$, $\beta > 0$.
		\ENSURE
		a signature $(\salt, z)$
		\STATE $\matA \gets [\matI_n \mid \mathcal{M}(h)], \matT \gets \begin{bmatrix}
			\mathcal{M}(g) \\ \mathcal{M}(f)
		\end{bmatrix}$
		\STATE $\salt \overset{\$}{\leftarrow} \set{0,1}^{320}$, $u \gets \hash(\msg,\salt)$
		\STATE $(z_0, z_1) \gets \algPreSamp(\matA,\matT,u,r,s)$
		\STATE $e \gets u - (z_0+z_1h) \bmod Q$
		\IF{$\|(z_0+e, \gamma z_1)\| > \beta$}
		\STATE restart
		\ENDIF
		\STATE return $(\salt, z_1)$
	\end{algorithmic}
\end{algorithm}

\subsubsection{Verification.} The preimage $(z_0+e,z_1)$ is short and $(z_0+e) + hz_1 = u \bmod Q$.
The verification is to check the shortness of $(u-hz_1, z_1)$.
To balance the sizes of $u-hz_1=z_0+e$ and $z_1$, we scale $z_1$ by a factor $\gamma = \frac{\sqrt{s^2 + (p^2-1)/12}}{s}$ in the shortness check.
A formal description is given in Algorithm~\ref{alg:ntru_verify}.

\begin{algorithm}[htb]
	\caption{$\algNtruVerify$}{\label{alg:ntru_verify}}
	\begin{algorithmic}[1]
		\REQUIRE
		a signature $(\salt, z)$ of a message $\msg$, the public key $h$,\\
		$\gamma = \frac{\sqrt{s^2 + (p^2-1)/12}}{s}$, $\beta > 0$.
		\ENSURE
		Accept or Reject
		\STATE $u \gets \hash(\msg,\salt)$, $z' \gets (u - hz)\bmod Q$
		\STATE Accept if $\|(z', \gamma z)\| \leq \beta$, otherwise Reject
	\end{algorithmic}
\end{algorithm}

\subsection{Security analysis}\label{subsec:ntru_security}
We now give a security proof for \Robin.
To start with, we need some treatment on the scaling factor $\gamma$ which modified the shortness condition for better concrete security.
To this end, we introduce a variant of NTRU-SIS in the twisted norm as follows.
\begin{definition}[NTRU-SIS in the twisted norm, $\mathsf{NTRUSIS}^{\|\cdot\|_\gamma}_{\cR,Q,\chi,\beta}$]
	Let $\cR=\Z[x]/(x^n - 1)$ with $n$ a prime.
	Let $Q > 0$ be an integer, $\chi$ be a distribution over $\cR$ and $\beta > 0$, $\gamma \geq 1$.
	Given a random NTRU public key $h$ of either $\mathsf{NTRU}_{\cR,Q,\chi}$ or $\mathsf{iNTRU}_{\cR,Q,\chi,e}$, find a non-zero vector $(x_0,x_1)$ such that
	$\|(x_0,\gamma x_1)\|\leq \beta$ and $x_0+hx_1 = 0 \bmod Q$.
\end{definition}
It is easy to verify that
\[
\mathsf{NTRUSIS}_{\cR,Q,\chi,\beta\gamma} \leq_p \mathsf{NTRUSIS}^{\|\cdot\|_\gamma}_{\cR,Q,\chi,\beta} \leq_p \mathsf{NTRUSIS}_{\cR,Q,\chi,\beta/\gamma},
\]
which shows the equivalence between NTRU-SIS and its twisted-norm version.

To prove the strong EU-CMA security of \Robin, we follow the same arguments for the GPV signatures~\cite{gentry2008trapdoors} and
combine Theorem~\ref{thm:maintheorem} showing that the preimage and error output by $\algPreSamp$ is simulatable for uniformly random targets. 
\begin{theorem}\label{thm:Robinsecurity}
	The \Robin{} signature scheme is strongly existentially unforgeable under a chosen-message attack in the random oracle model assuming the hardness of $\mathsf{NTRUSIS}^{\|\cdot\|_\gamma}_{\cR,Q,\chi,2\beta}$.
\end{theorem}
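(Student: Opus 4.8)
The plan is to follow the standard GPV hash-and-sign security template of~\cite{gentry2008trapdoors}, adapted to the approximate-trapdoor setting with the twisted norm. First I would set up the game hop: let $\mathcal{F}$ be a forger against the strong EU-CMA security of \Robin{} making $q_h$ random-oracle queries and $q_s$ signing queries, and suppose $\mathcal{F}$ succeeds with non-negligible probability $\varepsilon$. We build a reduction $\mathcal{B}$ that, on input a random inhomogeneous NTRU public key $h$ (i.e.\ an $\mathsf{iNTRU}_{\cR,Q,\chi,e}$ instance with $e = p$), simulates the \Robin{} signing oracle and the random oracle $\hash$ so that any forgery yields a short non-zero $(x_0,x_1)$ with $x_0 + hx_1 = 0 \bmod Q$ and $\|(x_0,\gamma x_1)\| \le 2\beta$, i.e.\ a solution to $\mathsf{NTRUSIS}^{\|\cdot\|_\gamma}_{\cR,Q,\chi,2\beta}$.

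Next I would describe the oracle simulation. The key point is Theorem~\ref{thm:maintheorem}: for uniformly random targets, the pair $(z_0,z_1,e)$ produced by $\algPreSamp(\matA,\matT,u,r,s)$ is statistically indistinguishable from sampling $\vecx \gets D_{\Z^{2n},s}$, $e \gets \chi_\vece = U(\Z_p^n)$, and setting $u = \matA\vecx + e \bmod Q$. Here one must check the parameter hypotheses of Theorem~\ref{thm:maintheorem} hold under the \Robin{} parameter choices: $r \ge q\bar r = q\,\smootheps(\Z^n) = \smootheps(\lat(q\matI_n))$, and $s^2 \ge (r^2 + \smootheps(\Z^n)^2)(s_1(\matT)^2+1)$, which follows from $s \ge \frac{\sqrt{1+q^2}}{q} r \alpha\sqrt{2(a+b)}$ together with $s_1(\matT)^2 = s_1(\mathcal{M}(f\bar f + g\bar g)) \le \alpha^2 \cdot 2(a+b)$ from the \Robin{} trapdoor-quality condition. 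Hence $\mathcal{B}$ answers signing queries \emph{without} the trapdoor: on a fresh query $(\msg,\salt)$ it samples $(z_0,z_1) \gets D_{\Z^{2n},s}$ and $e \gets U(\Z_p^n)$, programs $\hash(\msg,\salt) := (z_0 + z_1 h) + e \bmod Q$ (aborting on the negligible-probability collision event that this point was already queried, and re-randomising $\salt$ on the rejection-sampling restart so the acceptance check $\|(z_0+e,\gamma z_1)\| \le \beta$ is handled exactly as in the real scheme), and returns $(\salt, z_1)$. Random-oracle queries not tied to a signature are answered with fresh uniform values in $\Z_Q^n$. By Theorem~\ref{thm:maintheorem} and the $320$-bit salt (making salt collisions across the $q_s$ signing queries negligible), this simulation is statistically close to the real game.

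Then comes the forgery-extraction step. When $\mathcal{F}$ outputs a forgery $(\msg^*, \salt^*, z^*)$ with $\|(z', \gamma z^*)\| \le \beta$ where $z' = \hash(\msg^*,\salt^*) - h z^* \bmod Q$, I split into the usual two cases. If $(\msg^*,\salt^*)$ was never the target of a programmed signing response, $\mathcal{B}$ has set $u^* = \hash(\msg^*,\salt^*)$ to a fresh uniform value; by the GPV argument, $u^*$ is a uniform target with min-entropy essentially $\log Q^n$, and since $\mathcal{B}$ itself knows \emph{a} short preimage-plus-error pair $(\tilde z_0 + \tilde e, \tilde z_1)$ with $(\tilde z_0 + \tilde e) + h\tilde z_1 = u^* \bmod Q$ (it can compute one using the trapdoor — wait, it does not have the trapdoor; instead, as in GPV, one argues that conditioned on $\mathcal{F}$'s view the short solution set is large, so $\mathcal{F}$'s answer differs from a hypothetical independently-sampled one with probability $\ge 1/2$), the difference $((z' - (\tilde z_0+\tilde e)), \gamma(z^* - \tilde z_1))$ is a non-zero vector of norm $\le 2\beta$ in the twisted norm that maps to $0$ under $[\matI_n \mid \mathcal{M}(h)]$, i.e.\ an $\mathsf{NTRUSIS}^{\|\cdot\|_\gamma}_{\cR,Q,\chi,2\beta}$ solution. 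If instead $(\msg^*,\salt^*)$ coincided with a signing query that returned $(\salt^*, z_1)$, then strong unforgeability forces $z^* \ne z_1$; since both $(z', \gamma z^*)$ and the simulator's stored $(z_0+e, \gamma z_1)$ are short and both satisfy the affine relation with the same $\hash$ value, their difference is again a non-zero short twisted-norm kernel element, giving the $\mathsf{NTRUSIS}^{\|\cdot\|_\gamma}$ solution.

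The main obstacle I anticipate is the min-entropy / "the forger can't guess the one preimage we'd have output" argument in the first forgery case: one needs to argue, exactly as in~\cite{gentry2008trapdoors}, that conditioned on the adversary's entire view the distribution $D_{\Z^{2n},s}$ restricted to $\{\vecx : \matA\vecx + e = u^* \bmod Q\}$ (for the relevant $e$) has enough residual entropy that the forger's $(z',z^*)$ coincides with a freshly-resampled honest preimage only with negligible probability — this requires $s$ above the smoothing parameter of $\lamperp_Q(\matA)$ and a non-triviality argument that the coset is not a single point, which here also interacts with the fact that the error $e$ is only \emph{uniform over $\Z_p^n$} rather than Gaussian. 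Handling the twisted norm throughout is purely bookkeeping via the equivalence $\mathsf{NTRUSIS}_{\cR,Q,\chi,\beta\gamma} \le_p \mathsf{NTRUSIS}^{\|\cdot\|_\gamma}_{\cR,Q,\chi,\beta} \le_p \mathsf{NTRUSIS}_{\cR,Q,\chi,\beta/\gamma}$ noted just before the theorem, so I would state the reduction directly in the twisted norm and not convert back.
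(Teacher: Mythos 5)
Your high-level template is the same as the paper's — simulate signing by programming $\hash$ with synthetic preimage/error pairs via Theorem~\ref{thm:maintheorem}, then extract an $\mathsf{NTRUSIS}^{\|\cdot\|_\gamma}$ solution from a forgery — and your parameter-checking discussion (verifying the hypotheses of Theorem~\ref{thm:maintheorem} under the \Robin{} settings) is a welcome detail the paper glosses over. However, there is a genuine gap in your oracle simulation strategy that breaks the forgery-extraction step, and you yourself notice it mid-sentence but do not resolve it.

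You answer \emph{``random-oracle queries not tied to a signature''} with fresh uniform values in $\Z_Q^n$. But then, when the forger targets such a hash point, your reduction has no stored short preimage $(\tilde z_0+\tilde e, \tilde z_1)$ to subtract from the forged $(z', z^*)$, and — as you write — it cannot compute one because it does not have the trapdoor. The ``conditioned on the view, the coset has high min-entropy so a hypothetical resample differs from the forger's answer'' argument does not rescue this: min-entropy tells you the forger's preimage is unlikely to match \emph{a specific stored one}, but you need the reduction to \emph{actually hold} a second short preimage to form the non-zero kernel vector. The paper's proof closes this hole by programming \emph{every} $\hash$ query — not just signing-related ones — with a fresh $(z_0,z_1)\gets D_{\Z^{2n},s}$, $e\gets U(\Z_p^n)$, setting $u = z_0 + z_1 h + e \bmod Q$, and storing $((\salt,\msg),z_0,z_1,e,u)$. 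By Theorem~\ref{thm:maintheorem} this is statistically indistinguishable from a truly random oracle (since $u$ is statistically close to uniform in $\Z_Q^n$), and crucially the reduction \emph{always} has a stored short preimage for whatever hash point the forger attacks. Then both of your two forgery cases collapse into one: look up $(z_0^*,z_1^*,e^*)$, compute $z_0' = \hash(\salt^*,\msg^*) - z_1 h$, and output $(z_0^*+e^*-z_0',\ z_1^*-z_1)$. Non-triviality ($z_1^* \ne z_1$) follows either from the strong-forgery condition (if the pair was a signing query) or from the high min-entropy of $z_1^*\sim D_{\Z^n,s}$ conditioned on $u^*$ (if not). Once you adopt the program-everything strategy, the GPV min-entropy argument you were worried about is invoked only for this non-triviality claim, not to manufacture a preimage, and your proof goes through.
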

\begin{proof}
  Suppose, for contradiction, that there is an adversary $\mathcal A$ that breaks the strong EU-CMA security of \Robin{} with non-negligible probability $\varepsilon$.
  We construct a polynomial time algorithm $\mathcal S$ that solves $\mathsf{NTRUSIS}^{\|\cdot\|_\gamma}_{\cR,Q,\chi,2\beta}$ with probability close to $\varepsilon$.
  Given a random NTRU public key $h$, $\mathcal S$ runs
  $\mathcal A$ and simulates the random oracle $\mathsf H$ and signing oracle as follows:

  \begin{itemize}
    \item for the query to $\mathsf H$ on $(\mathsf{salt,msg})$,
    if $\mathsf H(\mathsf{salt,msg})$ is not queried, then
    $\mathcal S$ samples $(\vecz=(z_0,z_1), e)\gets D_{\mathbb Z^{2n},s} \times U(\mathbb Z_p^n)$,
    returns $u = z_0+z_1h+e\mod Q$ as the random oracle response and stores
    $((\mathsf{salt,msg}),z_0,z_1,e,u)$.
    Otherwise $\mathcal S$ looks up $((\mathsf{salt,msg}),z_0,z_1,e,u)$ and returns $u$ to $\mathcal A$.

	\item for every signing query on $\mathsf{msg}$, $\mathcal S$ samples $\mathsf{salt}\overset{\$}{\leftarrow}\{0,1\}^{320}$,
	$(\vecz=(z_0,z_1), e)\gets D_{\mathbb Z^{2n},s} \times U(\mathbb Z_p^n)$, then outputs $(\mathsf{salt},z_1)$ to $\mathcal A$ as the signature, and stores $((\mathsf{salt,msg}),z_0,z_1,e,u = z_0+z_1h+e\mod Q)$ in the random oracle storage.
  \end{itemize}

Without loss of generality, assume that before outputting the signature forgery $(\mathsf{salt^*},z_1)$ for the message $\mathsf{msg^*}$, $\mathcal A$ queries $\mathsf H$ on $(\mathsf{salt}^*, \mathsf{msg^*})$. Then $\mathcal S$ computes $z'_0=\mathsf{H(salt^*,msg^*)}-z_1h\mod Q$ and looks up $((\mathsf{salt^*,msg^*}),z^*_0,z^*_1, e^*,u^*)$ in its local storage. Finally, $\mathcal S$ outputs $(z^*_0+e^*-z'_0,z^*_1-z_1)$ as a solution.

By Theorem~\ref{thm:maintheorem}, the view of $\mathcal A$ in the real scheme is indistinguishable from the view provided by $\mathcal S$ except with negligible probability $Q_{sign}^2/2^{320}$, in which case repeated signature queries on the same message $\mathsf{msg}$ use the same $\mathsf{salt}$. It remains to prove that $z^*_1\neq z_1$. In fact, if $\mathsf{msg}^*$ has been queried to the signing oracle before, then the above inequality holds by the definition of a successful forgery; if $\mathsf{msg}^*$ has not been queried to the signing oracle, then $z_1$ is with high min-entropy for appropriate parameters, so $z^*_1\neq z_1$ with overwhelming probability. \qed
\end{proof}

\subsection{Concrete parameters}\label{subsec:ntru_param}
We provide 3 parameter sets for \Robin{} in Table~\ref{tab:robin_param} for the NIST security levels 1, 3 and 5 respectively.
In all parameter sets, $Q$ is a power of $2$ and $n$ is a prime such that the order of $2$ in $\Z_n$ is either $n-1$ or $\frac{n-1}{2}$ as suggested in~\cite{hoffstein2017choosing}. 
Let $b = a - 1 = \floor{\frac n 4}$.
We choose $\alpha \approx 1.7$ to guarantte the key generation terminate with a small number of trials.
The parameter $\bar r = \smootheps(\Z^n)$ uses $\epsilon = 2^{-36}$ that suffices to ensure a security level $\leq 256$ bits with up to $2^{64}$ signature queries as per~\cite{prest2017sharper}.

In Table~\ref{tab:robin_param}, the numbers of signature sizes are made according to the entropy of the preimage.
This can be efficiently obtained by using batch encoding with ANS (Asymetric Numeral System) as in~\cite{espitau2022shorter}.
The concrete security is estimated by the usual cryptanalytic methods for lattice-based cryptography. Details are provided in Supplementary Material~\ref{appendix:securityestimate}.

\begin{table}[H]\renewcommand\arraystretch{1.2}
	\centering\begin{threeparttable}
		\begin{tabular}{|p{4.2cm}<{\centering}| p{2.5cm}<{\centering} | p{2.5cm}<{\centering} | p{2.5cm}<{\centering} |}
			\hline
			& $\Robin$-701 & $\Robin$-1061 & $\Robin$-1279\\
			\hline
			{Target security level} & NIST-I & NIST-III & NIST-V \\
			{$n$} & $701$ & $1061$ & $1279$ \\
			{$(Q,p,q)$} & $(16384, 2048, 8)$ & $(32768, 4096, 8)$ & $(32768, 4096, 8)$ \\
			{$(a,b)$} & $(176,175)$ & $(266,265)$ & $(320,319)$ \\
			{$\alpha$} & $1.65$ & $1.7$ & $1.75$ \\
			{$r$} & $10.22$ & $10.28$ & $10.31$ \\
			{$s$} & $449.8$ & $573.8$ & $650.4$ \\
			{$\gamma$} & $1.65$ & $2.29$ & $2.07$ \\
			{$\beta$} & $28928.7$ & $62965.5$ & $70983.7$ \\
			\hline
			{Public key size (in bytes)} & $1227$ & $1990$ & $2399$ \\
			{Signature size (in bytes)} & $992$ & $1527$ & $1862$ \\
			\hline
			{Key recovery security (C/Q)}  & 116 / 105  & 181 / 165 & 228 / 207  \\
			\hline
			{Forgery security (C/Q)} &  130 / 118 & 214 / 195 & 264 / 240  \\
			\hline
		\end{tabular}
		\vspace{.2cm}
		\caption{Suggested parameters for $\Robin$. }\label{tab:robin_param}
	\end{threeparttable}
\end{table}

\subsection{Comparison with Falcon and Mitaka}\label{subsec:ntru_comparison}
\subsubsection{Implementation.}
\Robin{} has significant advantages from the implementation standpoint.
First, \Robin{} uses only one vector as the NTRU secret and avoids the notoriously complex NTRU trapdoor generation.
This can be crucial to the implementations and the key storage, especially when the key management for the entire lifecycle
is required (e.g. by the FIPS 140-2~\cite{FIPS}).
Second, \Robin{} has an online/offline structure as Mitaka, and its online operations are simple and fully over integers, which surpasses Mitaka.
In particular, base samplings in the online phase are in the form  $D_{q\Z+c,r}$ with $c\in\Z$.
This is beneficial for further optimization and side-channel protections.
Third, the offline sampling can also be implemented without resorting floating-point numbers by the technique in~\cite{ducas2020integral}.
The integral implemenation seems more convenient compared to the integer version of Mitaka.

\subsubsection{Performance.}
The size of \Robin{} is comparable to that of Falcon and Mitaka:
the total bandwidth (i.e. public key size + signature size) of \Robin{} is larger by $\approx 40\%$ than that of Falcon and 
by $25\% - 35\%$ than that of Mitaka.
Detailed comparisons are shown in Table~\ref{tab:NTRU_comparison}.

\begin{table}[h]\renewcommand\arraystretch{1.2}
	\centering
	\caption{Comparisons in terms of sizes with Falcon and Mitaka.
		For a fair comparison, all signature sizes are estimated as per the entropic bound, which can be closely obtained by entropic encoding as shown  in~\cite{espitau2022shorter}.
	}\label{tab:NTRU_comparison}
	\setlength{\tabcolsep}{2mm}{}
	\begin{threeparttable}
	\begin{tabular}{rcccc}
		\toprule
		& Security level & Pub. Key size (in bytes) & Sig. size (in bytes) \\
		\hline
		Falcon-512 & NIST-I & 896 & 643 \\
		Mitaka-648 & NIST-I & 972 & 807 \\
		{\bf \Robin-701} & NIST-I & {\bf 1227} & {\bf 992} \\
		\hline
		Mitaka-864 & NIST-III & 1512 & 1148 \\
		{\bf \Robin-1061} & NIST-III & {\bf 1990} & {\bf 1527}  \\
		\hline
		Falcon-1024 & NIST-V & 1792 & 1249 \\
		Mitaka-1024 & NIST-V & 1792 & 1376 \\
		{\bf \Robin-1279} & NIST-V & {\bf 2399} & {\bf 1862} \\
		\bottomrule
	\end{tabular}
\end{threeparttable}
\end{table}

\section{Shorter LWE-based Hash-and-Sign Signatures}\label{sec:lwe}
The LWE-based hash-and-sign signatures are rarely seen as a competitive post-quantum candidate in contrast to their NTRU and Fiat-Shamir counterparts, mainly due to their large sizes.
In this section, we fill the gap in practical LWE-based hash-and-sign signatures with a new scheme \Eagle.
\Eagle{} is instantiated with our compact gadget and based on Ring-LWE.
It achieves a desirable performance: \Eagle{} is substantially smaller than the state-of-the-art LWE-based hash-and-sign signatures~\cite{chen2019approximate}, and
even smaller than the LWE-based Fiat-Shamir signature scheme Dilithium~\cite{dilithium}.
While \Eagle{} is less efficient than the NTRU-based instantiation~\Robin{}, we believe it is of practical interest given the preference for using LWE to NTRU sometimes.

\subsection{Description of the \Eagle{} signature scheme}\label{subsec:rlwe_alg}
\subsubsection{Parameters.}
\Eagle{} is based on the Ring-LWE assumption over $\cR = \cR_n^+ = \Z[x] / (x^n+1)$ with $n$ a power of $2$ and the modulus $Q = pq$.
\Eagle{} uses the secret with a fixed hamming weight in $\mathcal{T}(n, a, b)$ (defined in Section~\ref{subsec:ntru_alg}),
and let $\alpha$ be the parameter controlling the quality of the trapdoor such that
$\sqrt{s_1\left(\mathcal{M}(f\bar f + g\bar g)\right)} \leq \alpha \|(f,g)\| = \alpha \sqrt{2(a+b)}.$
Let $\bar r = \smootheps(\Z^n)$ and $r \geq q\bar r$ be the width for the approximate gadget sampler.
Let $s \geq \frac{\sqrt{1+q^2}}{q}r\alpha \sqrt{2(a+b)}$ be the width for approximate preimages.
Let $\beta$ be the acceptance bound of $\|(z_0+e, \gamma z_1, \gamma z_2)\|$ where $(z_0,z_1,z_2)$ is the approximate preimage, $e$ is the approximate error and $\gamma = \frac{\sqrt{s^2 + (p^2-1)/12}}{s}$ such that $\|z_0+e\|\approx \gamma \|z_1\|\approx \gamma \|z_2\|$.

\subsubsection{Key generation.}
The public key is essentially $(a,b=p-(af+g)\bmod Q)$ where $a$ is uniformly random over $\cR_Q = \cR / (Q\cdot\cR)$ and $f,g\overset{\$}{\leftarrow} \mathcal{T}(n,a,b)$.
The polynomial $a$ is stored as a seed (of length 32 bytes), which halves the public key size.
We apply the techniques in~\cite{espitau2022simpler} as in \Robin{} to refine the quality of $(f,g)$. A formal description of the key generation is given in  Algorithm~\ref{alg:lwe_keygen}.

\begin{algorithm}[htb]
	\caption{$\algLWEKG$}{\label{alg:lwe_keygen}}
	\begin{algorithmic}[1]
		\REQUIRE
		the ring $\cR = \cR_n^+$ with $n$ a power of 2, $Q = pq$, and $\alpha > 0$.
		\ENSURE
		public key $(\seed_a,b)$,
		secret key $(f,g) \in \cR^2$
		\STATE $\seed_a \overset{\$}{\leftarrow} \set{0,1}^{256}$, $a \leftarrow \expand(\seed_a)$ \hfill\COMMENT{$\expand$ maps a seed to an element in $\cR$}
		\STATE $f_1,\cdots, f_{K}, g_1,\cdots,g_{K} \overset{\$}{\leftarrow} \mathcal{T}(n, a,b)$ with $K=5$
		\FOR{$i = 1$ to $K$}
		\FOR{$j = 1$ to $K$}
		\STATE find $k \in \Z_n^*$ minimizing $s_1\left( \mathcal{M}\left(f_i\bar f_i + \sigma_k(g_j)\overline{\sigma_k(g_j)}\right) \right)$
		\STATE $(f,g) \gets (f_i,\sigma_k(g_j))$
		\IF {$\sqrt{s_1\left(\mathcal{M}(f\bar f + g\bar g)\right)} \leq \alpha \sqrt{2(a+b)}$}
		\STATE $b \leftarrow p - (af+g) \bmod Q$
		\STATE return $((\seed_a,b), (f,g))$
		\ENDIF
		\ENDFOR
		\ENDFOR
		\STATE restart
	\end{algorithmic}
\end{algorithm}

\subsubsection{Signing procedure.}
Given the hashed message $u$, the signing procedure shown in Algorithm~\ref{alg:lwe_sign} samples a short preimage $(z_0,z_1,z_1)$ such that $z_0 + az_1 + bz_1 = u - e \bmod Q$ for a small $e$.
Only $(z_1,z_2)$ is used as the actual signature, as the short term $(z_0+e) = u - az_1-bz_2 \bmod Q$ can be recovered during verification. 
Again, the acceptance bound $\beta=1.04\cdot \Exp[\|(z_0+e,\gamma z_1)\|]$, which makes the restart happen with low probability. 

\subsubsection{Verification.}
The preimage $(z_0+e,z_1,z_2)$ is short and $(z_0+e) + az_1 + bz_2 = u \bmod Q$.
The verification is to check the shortness of $(u-az_1-bz_2, z_1, z_2)$.
A formal description is given in Algorithm~\ref{alg:lwe_verify}.

\begin{algorithm}[htb]
	\caption{$\algLWESign$}{\label{alg:lwe_sign}}
	\begin{algorithmic}[1]
		\REQUIRE
		a message $\msg$ and the key pair $((\seed_a,b), (f,g))$
		\ENSURE
		a signature $(\salt, (z_1,z_2))$
		\STATE $a \leftarrow \expand(\seed_a)$, $\salt \overset{\$}{\leftarrow} \set{0,1}^{320}$, $u \gets \hash(\msg,\salt)$
		\STATE $\matA \gets [\matI_n \mid \mathcal M(a) \mid \mathcal M(b)], \matT \gets
		\begin{bmatrix}
			\mathcal M(g) \\ \mathcal M(f) \\ \matI_n
		\end{bmatrix}$
		\STATE $(z_0, z_1, z_2) \gets \algPreSamp\left(\matA,\matT,u,r,s\right)$
		
		\STATE $e \gets u - (z_0+az_1+bz_2) \bmod Q$
		\IF{$\|(z_0+e, \gamma z_1, \gamma z_2)\| > \beta$}
		\STATE restart
		\ENDIF
		\STATE return $(\salt, (z_1,z_2))$
	\end{algorithmic}
\end{algorithm}
\begin{algorithm}[htb]
	\caption{$\algLWEVerify$}{\label{alg:lwe_verify}}
	\begin{algorithmic}[1]
		\REQUIRE
		a signature $(\salt, (z_1,z_2))$ of a message $\msg$, the public key $(\seed_a,b)$,\\
		$\gamma = \frac{\sqrt{s^2 + (p^2-1)/12}}{s}$, $\beta > 0$.
		\ENSURE
		Accept or Reject
		\STATE $u \gets \hash(\msg,\salt)$, $a \gets \expand(\seed_a)$, $z' \gets (u - az_1-bz_2) \bmod Q$
		\STATE Accept if $\|(z', \gamma z_1, \gamma z_2)\| \leq \beta$, otherwise Reject
	\end{algorithmic}
\end{algorithm}


\subsection{Security analysis}\label{subsec:lwe_security}
Similar to \Robin, the security of \Eagle{} is based on a variant of Ring-SIS in the twisted norm.
\begin{definition}[Ring-SIS in the twisted norm, $\mathsf{RSIS}_{\cR,m,Q,\beta}^{\|\cdot\|_{\gamma}}$]
	Let $\cR = \Z[x]/(x^n+1)$ with $n$ a power of $2$.
	Let $m,Q>0$ be integers and $\beta > 0$.
	Let $\cR_Q = \cR / (Q\cdot\cR)$.
	Given a uniformly random $\matA \in \cR_Q^{m}$, find a non-zero $\vecx=(x_0,\vecx_1) \in \cR\times \cR^{m-1}$ such that $\matA\vecx =0\bmod Q$ and $\|(x_0,\gamma\vecx_1)\| \leq \beta$.
\end{definition}

Theorem~\ref{thm:Eaglesecurity} shows the strong EU-CMA security of \Eagle. We omit the proof, as it follows the same argument with that of Theorem~\ref{thm:Robinsecurity}.
\begin{theorem}\label{thm:Eaglesecurity}
	The \Eagle{} signature scheme is strongly existentially unforgeable under a chosen-message attack in the random oracle model assuming the hardness of $\mathsf{RSIS}_{\cR,m,Q,\beta}^{\|\cdot\|_{\gamma}}$ and $\mathsf{RLWE}_{\cR,1,Q,\chi}$ with $\chi=U(\mathcal{T}(n, a,b))$.
\end{theorem}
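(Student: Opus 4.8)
The plan is to mirror the GPV-style reduction used for Theorem~\ref{thm:Robinsecurity}, inserting one additional game hop to deal with the Ring-LWE structure of the public key. Suppose $\mathcal A$ breaks the strong EU-CMA security of \Eagle{} with non-negligible advantage $\varepsilon$; I would argue through three games. Game $0$ is the real experiment. In Game $1$ the random oracle and the signing oracle are answered \emph{without} the secret key: on a fresh query $\hash(\salt,\msg)$ the challenger samples $(z_0,z_1,z_2)\gets D_{\Z^{3n},s}$ and $e\gets U(\Z_p^n)$, programs $\hash(\salt,\msg):=z_0+az_1+bz_2+e\bmod Q$, and records $((\salt,\msg),z_0,z_1,z_2,e)$; a signing query on $\msg$ picks a fresh $\salt$, runs the same procedure, and returns $(\salt,(z_1,z_2))$. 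Since our instantiation uses $(\matP,\matQ)=(p\matI_n,q\matI_n)$, for which $\chi_\vece=U(\Z_p^n)$, Theorem~\ref{thm:maintheorem} shows that the joint distribution of public key, preimages, oracle answers and errors in Game $1$ is statistically close to that of Game $0$, the only loss being the event that two signing queries on the same message draw colliding salts, of probability at most $Q_{sign}^2/2^{320}$. In particular the programmed values $z_0+az_1+bz_2+e$ are statistically close to uniform over $\Z_Q^n$, so this is a valid random oracle simulation.

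In Game $2$ I would replace the honestly generated public key $(\seed_a,b)$ by a uniformly random one. The point is that $af+g\bmod Q$ is exactly one Ring-LWE sample with secret $f$ and error $g$, both distributed as $\chi=U(\mathcal T(n,a,b))$; hence under decision-$\mathsf{RLWE}_{\cR,1,Q,\chi}$ the pair $(a,af+g)$, and therefore $(a,b)=(a,p-(af+g))$, is computationally indistinguishable from $(a,b')$ with $b'$ uniform over $\cR_Q$. The reduction is clean because the Game~$1$ challenger no longer uses the trapdoor $(f,g)$ at all --- in particular we never need $\matA\matT=\matP\bmod Q$ to hold for the uniform public key --- and because the event ``$\mathcal A$ outputs a valid forgery'' is efficiently checkable from the public key alone. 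After this hop $\matA$, viewed as $(1,a,b)\in\cR_Q^3$, is a uniformly random (HNF-shaped) instance, and $\mathcal A$ still forges with probability $\varepsilon-\mathrm{negl}-\mathrm{Adv}^{\mathsf{RLWE}}$.

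It then remains to extract a twisted-norm Ring-SIS solution from a Game~$2$ forgery $(\salt^*,(z_1^*,z_2^*))$ on $\msg^*$. Assuming w.l.o.g.\ that $\hash$ was queried at $(\salt^*,\msg^*)$, let $u^*$ be the stored answer and $(z_0',z_1',z_2',e')$ the stored tuple, so $(z_0'+e')+az_1'+bz_2'=u^*\bmod Q$. Successful verification gives $z^*:=u^*-az_1^*-bz_2^*$ with $z^*+az_1^*+bz_2^*=u^*\bmod Q$ and $\|(z^*,\gamma z_1^*,\gamma z_2^*)\|\le\beta$. Subtracting the two relations, the vector $\vecx=(x_0,x_1,x_2)$ with $x_0=z^*-(z_0'+e')$, $x_1=z_1^*-z_1'$, $x_2=z_2^*-z_2'$ satisfies $(1,a,b)\cdot\vecx=0\bmod Q$, and by the triangle inequality together with $\gamma\ge1$ one gets $\|(x_0,\gamma x_1,\gamma x_2)\|\le\|(z^*,\gamma z_1^*,\gamma z_2^*)\|+\|(z_0'+e',\gamma z_1',\gamma z_2')\|\le2\beta$, where the second term is bounded because the stored $(z_0'+e',\gamma z_1',\gamma z_2')$ obeys the same acceptance bound as honest signatures. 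Finally $\vecx\neq\veczero$: if $\msg^*$ was signed before, strong unforgeability forces $(z_1^*,z_2^*)$ to differ from the recorded signature; otherwise $z_1^*$ has high min-entropy for the chosen parameters, so $z_1^*\ne z_1'$ except with negligible probability. This contradicts the hardness of $\mathsf{RSIS}^{\|\cdot\|_\gamma}_{\cR,3,Q,2\beta}$.

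The only real novelty relative to the \Robin{} proof is the Ring-LWE hop, and it is also where I expect the main subtlety: one must check that $U(\mathcal T(n,a,b))$ plays the role of \emph{both} the secret and the error distribution in the assumed $\mathsf{RLWE}_{\cR,1,Q,\chi}$ problem, and that a single sample is enough since the public key exposes only $(a,b)$. Beyond that, the remaining care is bookkeeping: confirming the twisted-norm bound is subadditive under the componentwise splitting (which uses $\gamma\ge1$) so that the extracted solution has norm $\le2\beta$, and re-checking that programming $\hash$ as $z_0+az_1+bz_2+e$ produces near-uniform outputs --- which is precisely the marginal guaranteed by the simulated distribution of Theorem~\ref{thm:maintheorem}.
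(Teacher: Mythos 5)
Your proposal is correct and matches the route the paper intends: the paper omits the proof of Theorem~\ref{thm:Eaglesecurity} with the remark that it follows the same argument as Theorem~\ref{thm:Robinsecurity}, and your three-game structure is exactly the natural way to supply the omitted details. The one genuine addition over the \Robin{} proof is the Ring-LWE hop, and you handle it correctly, including the two subtleties that matter: the order of hops (the Theorem~\ref{thm:maintheorem} simulation must come \emph{before} the RLWE replacement, since the statistical argument needs a valid trapdoor pair, while the indistinguishability hop needs the challenger to no longer touch the trapdoor), and the fact that $U(\mathcal T(n,a,b))$ serves simultaneously as the RLWE secret and error distribution with a single exposed sample $(a,af+g)$. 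The extraction and the twisted-norm subadditivity via $\gamma\ge 1$ are also exactly as in the \Robin{} argument, and your $2\beta$ bound is the one the reduction actually produces (the $\beta$ in the theorem statement is apparently a typo, given the \Robin{} theorem's $2\beta$). One small detail you share with the paper's \Robin{} proof: the simulator as written does not apply the rejection step, so the stored $(z_0'+e',\gamma z_1',\gamma z_2')$ is only within the acceptance bound with probability $\approx 99\%$; including the restart in the simulation (or conditioning and absorbing the small loss) fixes this, but the paper's own proof of Theorem~\ref{thm:Robinsecurity} glosses over the same point.
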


\subsection{Concrete parameters}\label{subsec:lwe_param}
We provide 2 parameter sets for \Eagle{} in Table~\ref{tab:lwe_param}.
The public key size is computed as $n\cdot\log_2(Q)/8 + 32$ and the signature size is estimated as the entropic bound of the preimage plus 40 bytes for the salt. The details of concrete security estimate is shown in  Supplementary Material~\ref{appendix:securityestimate}.

\begin{table}[h]\renewcommand\arraystretch{1.2}
	\centering\begin{threeparttable}
		\begin{tabular}{|p{4.2cm}<{\centering} | p{2.5cm}<{\centering} | p{2.5cm}<{\centering} |}
			\hline
			 & \Eagle-512 & \Eagle-1024 \\
			\hline
			{Target security level} & 80-bit & NIST-III \\
			{$n$} & $512$ & $1024$ \\
			{$(Q,p,q)$} & $(16000, 2000, 8)$ & $(32400, 2700, 12)$ \\
			{$(a,b)$} & $(128,128)$ & $(256,256)$ \\
			{$\alpha$} & $1.7$ & $1.7$ \\
			{$r$} & $10.17$ & $15.42$ \\
			{$s$} & $394.2$ & $841.5$ \\
			{$\gamma$} & $1.36$ & $1.19$ \\
			{$\beta$} & $28493.5$ & $66118.5$ \\
			\hline
			{Public key size (in bytes)} & $928$ & $1952$ \\
			{Signature size (in bytes)} & $1406$ & $3052$ \\
			\hline
			{Key recovery security (C/Q)}  & 79 / 71  & 176 / 160  \\
			\hline
			{Forgery security (C/Q)} &   83 / 75 & 189 / 172  \\
			\hline
		\end{tabular}
		\vspace{.2cm}
		\caption{Suggested parameters for \Eagle. }\label{tab:lwe_param}
	\end{threeparttable}
\end{table}


\subsection{Comparison with LWE-based signatures}
Thanks to the compact gadget, \Eagle{} has much better compactness than existing LWE-based hash-and-sign signatures.
We first compare \Eagle{} with the Ring-LWE-based construction from~\cite{chen2019approximate}.
For a fair comparison, we re-parameterize the scheme in~\cite{chen2019approximate} such that the used secret has the same size with that in \Eagle{} and the overall size is nearly optimal for the target security level.
Nevertheless, for 80-bits (resp. 192-bits) of classical security level, the bandwidth of \Eagle{} is only about $30-40\%$ of that of the instantiation from~\cite{chen2019approximate}.
\Eagle{} is even smaller than Dilithium that is a representative LWE-based Fiat-Shamir signature scheme.
Detailed numbers are shown in Table~\ref{tab:lwe_cgm_comparison}.

\begin{table}[H]\renewcommand\arraystretch{1.2}
  \centering
	\caption{Comparisons in terms of sizes with Dilithium~\cite{dilithium} and~\cite{chen2019approximate}.
	The bit-security for Dilithium corresponds to
	the strongly-unforgeable version. The signature sizes for~\cite{chen2019approximate} and \Eagle{} are estimated as per the entropic bound.
	 }\label{tab:lwe_cgm_comparison}
	\setlength{\tabcolsep}{1.5mm}{}
  \begin{tabular}{|p{2.2cm}<{\centering} | p{2cm}<{\centering} | p{2cm}<{\centering} | p{2cm}<{\centering} |}
    \hline
                 	& Security (C/Q) & Pub. Key size (in bytes) & Sig. size  (in~bytes)  \\
    \hline
     Dilithium 1$^-$ &  89~/~81 & 992 & 1843 \\
     \cite{chen2019approximate}         & 79~/~71 & 2720 & 2753 \\
     \Eagle-512      & 79~/~71 & 928  & 1406 \\
     \hline
     Dilithium 3  &  176~/~159& 1952 & 3293 \\
     \cite{chen2019approximate}         & 180~/~164 & 7712 & 7172 \\
     \Eagle-1024      & 176~/~160 &1952& 3052 \\
    \hline
  \end{tabular}
\end{table}

%
%
%

\subsection{Comparison with \Robin}
As readers may have noticed, the Ring-LWE-based instantiation \Eagle{} is less efficient than the NTRU-based instantiation \Robin{} in Section~\ref{sec:ntru}.
For the NIST-III security level, while \Eagle{} and \Robin{} have roughly the same public key size, the \Eagle{} signatures are about 2 times the size of \Robin{} signatures.
This is an inherent gap, as the signatures in NTRU-based schemes are one ring element whereas the signatures in LWE-based schemes require at least two ring elements to recover the preimage.
In addition, the forgery security of \Eagle-1024 is lower than that of \Robin-1061 by more than 20-bits, although the degrees of the used ring are close.
The main cause is as follows.
While the public matrix $\matA$ in \Eagle{} is $n$-by-$3n$,
the best forgery attack would only use its submatrix of size $n$-by-$2n$, which is the same with the case of \Robin.
In constrast to \Robin, the acceptance bound of the \Eagle{} signature size is larger due to the wider $\matA$.
This lowers the forgery security.

Despite the worse performance than \Robin, \Eagle{} still occupies a fairly important position within the practical design of lattice signatures. The underlying Ring-LWE assumption could receive some preference to NTRU, especially for more powerful applications with overstretched parameters. Furthermore, \Eagle{} can be more conveniently adapted to the unstructured setting, thanks to the absence of costly matrix inversions in the key generation.
This may be a merit given the emphasis of post-quantum signatures not based on structured lattices raised by NIST\footnote{\url{https://csrc.nist.gov/csrc/media/Projects/pqc-dig-sig/documents/call-for-proposals-dig-sig-sept-2022.pdf}}.


\section{Conclusion}\label{sec:conclusion}
We develop a new lattice gadget construction of better compactness than the state-of-the-art. 
The main technique is a novel approximate gadget sampler, called semi-random sampler, in which the approximate error is deterministically generated and the preimage distribution is still simulatable without using the trapdoor. 
As an application, we present two practical hash-and-sign signature schemes instantiated with our compact gadget respectively based on NTRU and Ring-LWE. 
Our NTRU-based instantiation \Robin{} offers a quite simple implementation and high efficiency comparable to Falcon and Mitaka. 
This makes \Robin{} an attractive post-quantum signature for constrained environments. 
Our Ring-LWE-based scheme \Eagle{} is significantly smaller than 
the one~\cite{chen2019approximate} and even smaller than Dilithium. 
This demonstrates that LWE-based hash-and-sign signatures have much more potential than previously considered for practical applications. 

\subsection{Future works}
Our gadget framework actually supports diverse instantiations beyond the one used in \Robin{} and \Eagle. It would be interesting to explore more efficient constructions by combining different gadget matrices, lattice decoders and Gaussian samplers. 
It is also worthy to develop more algorithms for our gadget and then to build a complete toolkit as in~\cite{genise2019building}. 

Our gadget-based schemes are simpler than the NTRU trapdoor based ones and easily implemented fully over integers with the technique of~\cite{ducas2020integral}. 
We leave the optimized implementation and the provable side-channel protections as future works. 
In addition, our technique can be applied in advanced lattice cryptosystems. 
Evaluating its impact on the performance of advanced schemes needs a thorough investigation. 

Our proposals of \Robin{} and \Eagle{} do not fully integrate some recent techniques~\cite{jia2022lattice,espitau2022simpler,espitau2022shorter} to improve the performance and security, as we would like to focus more on the new gadget itself. 
Hence there shall be some room to improve the performance by adding these optimizations.


\bibliographystyle{alpha}
\bibliography{./biblio/ref}

\clearpage
\pagestyle{empty}
\appendix
~\vfill
\begin{center}
	\Huge\bfseries Supplementary Material
\end{center}
\vfill
\vfill
~
\clearpage
\section{Concrete Security Estimates}\label{appendix:securityestimate}
We estimate the concrete bit-security of our signature schemes according to the usual cryptanalytic methodology. 
To recap, we analyze the cost of the best attacks against key recovery and signature forgery, then translate the analysis into concrete bit-security using the Core-SVP model.  

\subsection{Lattice reduction and the Core-SVP model}
Lattice reduction is the task of finding a basis consisting of short and nearly orthogonal vectors. It is an important cryptanalytic tool used in lattice attacks. 
The most practical lattice reduction algorithms are BKZ~\cite{schnorr1994lattice} and its optimized variant~\cite{chen2011bkz,micciancio2016practical}. 
The BKZ algorithm is parameterized by the blocksize $\beta$. 
For a $d$-dimensional lattice $\lat$, BKZ-$\beta$ would find some $\vecv \in \lat$ with 
\[\|\vecv\| \leq \delta_\beta^d\vol(\lat)^{1/d}~~\text{and}~~
\delta_\beta \approx \left(\frac{(\pi\beta)^{\frac 1 \beta} \beta}{2\pi e}\right)^{\frac 1 {2(\beta-1)}}
\] 
for $\beta > 50$. 
The Core-SVP model estimates the cost of running BKZ-$\beta$ as $2^{0.292\beta}$ in the classical setting and $2^{0.265\beta}$ in the quantum setting. 
This is seen as a conservative concrete bit-security estimate. 

\subsection{Key recovery attack}
The key recovery against our signature schemes consists in finding the short secret $(\vecf,\vecg)$ such that $\matA\vecf + \vecg = \vecb \bmod Q$ where $\matA\in \Z_Q^{n\times n}$ and $\vecb \in \Z_Q^n$ are publicly known. 
The primal attack is a primary method for this task. 
It runs BKZ-$\beta$ on the lattice 
$\Lambda = \Lambda^{\perp}_Q([\matA \mid \matI_n \mid \vecb])$ 
of dimension $d = 2n+1$ to find the short $(\vecf, \vecg, -1)$. 
As shown in~\cite{alkim2016post}, a successful key recovery can be done when the blocksize $\beta$ satisfies 
\[\|(\vecf, \vecg, -1)\|\sqrt{\frac{3\beta}{4d}} \leq \delta_\beta^{2\beta - d - 1}\cdot Q^{\frac{n}{d}}\]
where $\sqrt{3/4}$ is set for a conservative estimate as in~\cite{falcon}. 
To optimize the attack, 
we also apply some known strategies prior to running BKZ:
\begin{enumerate}
	\item we guess the positions of $k$ zeros of $\vecf$ as suggested in~\cite{espitau2022shorter}; 
	\item we remove $l$ rows of $(\matA,\vecb)$ when constructing the lattice.  
\end{enumerate}
We choose $(k,l)$ to minimize the cost of the attack, which offers a few bits of improvement. 

\subsection{Forgery attack}
The signature forgery in our schemes is essentially to 
solve an approximate-CVP instance over the $q$-ary lattice. 
The nearest-colattice algorithm~\cite{espitau2020nearest} is a primary approximate-CVP algorithm. 
To address the twisted norm, we use the treatment of~\cite{espitau2022shorter} in the nearest-colattice framework. 
Specifically, given $(\matA,\vecu) \in \Z_Q^{n\times d} \times \Z_Q^n$ and $\beta > 0$, a preimage $(\vecx_0,\vecx_1) := \vecx \in \Z^n\times \Z^{d-n}$ such that $\matA \vecx = \vecu \bmod Q$ and $\|(\vecx_0,\gamma\vecx_1)\| \leq \beta$ can be computed by BKZ with blocksize $\beta$ satisfying 
\[
\beta \geq \min_{k\leq d-n}
	\big( \delta_\beta^{d-k}Q^\frac{n}{d-k}\gamma^{\frac{d-k-n}{d-k}} \big).
\]
We observe by experiments that for our LWE-based scheme, the best attack corresponds to $k \in [n, 2n]$, thus the associated CVP instance has a dimension $d-k \in [n,2n]$ as in the NTRU case. 
That is, the wider public matrix in the LWE setting does not seem to enhance the forgery security.

\end{document}